\DeclareMathOperator*\cnc{\scalerel*{\bigcirc}{\bigodot}}
\DeclareMathOperator*\pnc{\bigodot}
\title{A proof theoretic study of abstract termination principles}
\author{Thomas Powell}
\date{Preprint, \today}
\begin{document}

\maketitle

\begin{abstract}
We carry out a proof theoretic analysis of the wellfoundedness of recursive path orders in an abstract setting. We outline a very general termination principle and extract from its wellfoundedness proof subrecursive bounds on the size of derivation trees which can be defined in G\"{o}del's system T plus bar recursion. We then carry out a complexity analysis of these terms, and demonstrate how this can be applied to bound the derivational complexity of term rewrite systems.
\end{abstract}

\section{Introduction}
\label{sec-intro}

The ability to deduce whether or not a program terminates is crucial in computer science. Though termination is not a decidable property, a number of powerful \emph{proof rules}, or \emph{termination principles}, have been developed, which set out general conditions under which programs can be shown to terminate. Examples of these include path orders for rewrite systems \cite{Dershowitz(1982.0)}, size-change principles \cite{LeeJonBA(2001.0)} and more recently methods based on Ramsey's theorem \cite{PodRyb(2004.0)}.

Any termination principle $P$ gives rise to the following question: Given that a program can be proven to terminate using $P$, can we infer an upper bound on its complexity? This is in turn an instance of a much more general problem captured by Kreisel in his famous quote: \emph{``What more do we know if we have proved a theorem by restricted means than if we merely know the theorem is true?''}. 

In this article I focus on termination via path orders. This area already contains a number of well known complexity results of the above kind. For example, termination via the multiset path ordering implies primitive recursive derivational complexity, while the lexicographic path ordering induces multiple recursive derivation lengths. These bounds were initially established via direct calculations in \cite{Hofbauer(1992.0)} and \cite{Weiermann(1995.0)} respectively. 

Here, I take an approach to complexity closer to the spirit of Kreisel by addressing the following question: Given a proof that some abstract order is wellfounded, can we extract from this proof a subrecursive program which computes derivation sequences and thus provides a bound on their length?

Broadly speaking, there are two ways of accomplishing this. We could choose to concentrate on the \emph{proof}, showing that it can be formalised in some weak theory and then appealing to an appropriate logical metatheorem which would guarantee that a realizing program in some corresponding system of functionals exists. Alternatively, we could directly extract a realizing term by hand and show that it can be defined in some subrecursive calculus. The latter approach is chosen here: Exhibiting an explicit realizing term is not only more illuminating, but we can appeal to mathematical properties of that term to obtain more refined complexity results.

The starting point of this work is the elegant paper of Buchholz \cite{Buchholz(1995.0)}, who was the first to apply proof theory in the style of Kreisel to termination principles. More specifically, he rederived the aforementioned bounds on the multiset and lexicographic orders by showing that wellfoundedness of these orders could be formalised in weak fragments of Peano arithmetic, and then applying a program extraction theorem to obtain the corresponding bounds on the length of reduction sequences. Key to Buchholz's method is to consider finitely branching variants of the usual path orders - an approach which will be essential to us as well.

My second source of inspiration is the recent collection of papers (including \cite{BerOliSte(2015.0),BerSte(2015.0),FriYokSte(2017.0)}), which study both size-change termination and techniques based on Ramsey's theorem from the perspective of proof theory. In particular, in \cite{BerOliSte(2015.0)}, an upper bound on the length of transition sequences is given as a term of System T extended with bar recursion. It turns out that bar recursion - that is recursion over wellfounded trees - is a form of recursion naturally well suited to computing derivation trees of programs. Moreover, where complexity is concerned, one can directly appeal to \emph{closure properties} of bar recursion (see the recent \cite{OliStei(2017.0)}) to establish upper bounds on the size of these trees.

In this paper, I study an abstract termination principle which subsumes the majority of path orders encountered in the literature. It is closely related to the first termination theorem considered by Goubaul-Larrecq in \cite{GLar(2001.0)}, though here it is based on orderings which are assumed to be finitely branching. We give a classical proof of the theorem, which we then analyse to show that given a moduli which forms a computational analogue of the theorem's main condition, a function bounding the size of derivations can be defined using bar recursion of lowest type. A number of initial complexity results can already be given by appealing to \cite{OliStei(2017.0)} and related works.

I then consider a variant of the theorem in which a computationally stronger realizer to the premise is given. In this case, more refined complexity results are possible, which are set out in Corollary \ref{prclosure}. I conclude by showing how the well known upper bounds for the complexity of simplification orders follow from this result, and illustrating how Buchholz's finitely branching orders fit in to our general computational framework.

My hope is that the results of this paper form a framework for complexity which will can be developed further in the future, with potential for both more general and more refined results. In addition, in the process of our proof theoretic analysis we explore a number of deep mathematical concepts which underlie path orders, including minimal-bad-sequence style constructions, realizability and bar recursion, connections between the latter having been explored from a more general perspective in e.g. \cite{Powell(2018.0),Seisenberger(2003.0)}. I aim to demonstrate how these concepts all come together to form a particularly elegant illustration of the bridge between proofs and programs.

\subsection{Related work}
\label{sec-intro-related}

This article forms a considerable generalisation of the results presented in \cite{MosPow(2015.0)}, which is concerned exclusively with the multiset and lexicographic path orders. An abstract termination principle is also studied in \cite{Powell(2019.0)}, but there no restrictions to the logical complexity of the order are made, and the focus is on finding equivalent formulations of the axiom of dependent choice in all finite types.

\subsection{Prerequisites and notation}
\label{sec-intro-prereq}

We assume that the reader is familiar with G\"{o}del's System T of primitive recursive functionals in all finite type, which we will use as our base programming language. We will actually use a fairly rich formulation of System T, which includes both product type $\rho\times\tau$ together with finite sequence types $\rho^\ast$. We denote by $\sT_i$ the fragment of System T which only permits recursion of type level $i$. We will use the following notation:
\begin{itemize}

\item We denote by $0_\rho$ the canonical zero element of type $\rho$, defined in the obvious way (we set $0_{\rho^\ast}=[]$ for sequence types).

\item $|a|$ is the length of the sequence $a\in\rho^\ast$;

\item if $a=[a_0,\ldots,a_{k-1}]$ then $a\ast x:=[a_0,\ldots,a_{k-1},x]$ denotes the concatenation of $a$ with $x$, similarly $x\ast a:=[x,a_0,\ldots,a_{k-1}]$;

\item $\bar{a}:=a_{k-1}$ denotes the last element of $a$ (we just set $\bar{a}=0_\rho$ if $a=[]$);

\item we write $x\in a$ if $x=a_i$ for some $i<|a|$;

\item for $\alpha:\NN\to\rho$ we have $\initSeg{\alpha}{n}:=[\alpha_0,\ldots,\alpha_{n-1}]$;

\item for $a\in \rho^\ast$ we define $\ext{a}\in \NN\to\rho$ by $\ext{a}_n:=a_n$ if $n<|a|$ and $\ext{a}_n:=0_X$ otherwise.

\end{itemize}
At several points we will need to extend $\sT$ with constants $\rec^{\lhd,\rho}$ for wellfounded recursion of output type $\rho$ over some decidable wellfounded relation $\lhd$ on $\NN$, which will satisfy the defining axiom
\begin{equation*}
\rec^{\lhd,\rho}_f(x)=fx(\lambda y\lhd x\; . \; \rec^{\lhd,\rho}_f(y))
\end{equation*}
where $\lambda y\lhd x\; . \; g(y)$ is shorthand for `if $y\lhd x$ then $g(y)$ else $0$'. When defining recursive functionals we typically use the convention, as above, of writing parameters which don't change in the defining equation as a subscript.

\section{Finitely branching relations}
\label{sec-finitary}

We start of in this section by introducing some basic facts and definitions concerning finitely branching relations in general, and introduce the concept of bar recursion.

Our basic object of study will be a binary relation $\succ$ on some set $X$. In the context of termination analysis, $X$ is typically a set of terms in some language. A program $P$ is then considered to be reducing with respect to $\succ$ if whenever $$t_0\leadsto_P t_1\leadsto_P \ldots\leadsto_P t_k$$ is a run on $P$, then $t_i\succ t_{i+1}$ for all $i<k$. Thus wellfoundedness of $\succ$ implies that the program terminates.

However, up until Section \ref{sec-trs}, everything will be carried out in an abstract setting. For now, the only assumption we make about $X$ is that it can be \emph{arithmetized} i.e. comes equipped with some bijective encoding $\goedel{\cdot}:X\to \NN$, and similarly $\succ$ is a primitive recursive relation i.e. there is some term $r:\NN\times\NN\to\NN$ definable in $\sT_0$ such that $x\succ y$ iff $r(\goedel{x},\goedel{y})=0$. For the sake of clarity, we continue to refer to the set as $X$ rather than $\NN$, but it should be remembered that for practical purposes $\succ$ is a relation on natural numbers, and this will indeed be crucial when we come to our complexity results later.

In this paper we will primarily be concerned with relations which are finitely branching.
\begin{definition}
\label{def-fb}
We say that $\succ$ is \emph{finitely branching} if 
\begin{equation*}
(\forall x\in X)(\exists a\in X^\ast)(\forall y)(x\succ y\leftrightarrow y\in a).
\end{equation*}
In particular, the number of distinct elements $y$ with $x\succ y$ is bounded above by $|a|$.
\end{definition}
We now need to give a precise definition of what we mean by \emph{wellfoundedness}. We will primarily be interested in the following formulation, which in \cite{BerOliSte(2015.0)} is referred to as \emph{classical wellfoundedness}:
\begin{definition}
We call a sequence $\alpha\in X^\NN$ (classically) wellfounded and write $W(\alpha)$ if
\begin{equation*}
\exists n(\alpha_n\nsucc \alpha_{n+1})
\end{equation*}
where $x\nsucc y$ denotes $\neg(x\succ y)$.
Similarly, we say that an element $x\in X$ is wellfounded, and also write $W(x)$, if
\begin{equation*}
\forall\alpha(x=\alpha_0\to W(\alpha)).
\end{equation*}
The relation $\succ$ is wellfounded if $(\forall x) W(x)$.
\end{definition}
In the next Section we will consider an equivalent formulation of wellfoundedness which is classically equivalent to the above but computationally stronger. We now make precise what we mean by the \emph{complexity} of some object in $X$.
\begin{definition}
[Finite chain]
We call a finite sequence $a\in X^\ast$ is a $\succ$-chain, and write $C_\succ(a)$, if $a_i\succ a_{i+1}$ for all $i<|a|-1$.
\end{definition}
\begin{definition}
[Derivational complexity]
\label{defn-dc}
Let $x\in X$ and suppose that there exists some $k$ such that 
\begin{equation*}
C_\succ(x\ast a)\to |a|\leq k.
\end{equation*}
We call the minimal such $k$ the \emph{derivational complexity} of $x$ and denote it by $\dc(x)$. We say that the derivational complexity of some wellfounded $\succ$ is bounded by some function $f:X\to \NN$ if $\dc(x)\leq f(x)$ for all $x\in X$.
\end{definition}
We now give a syntactic formulation of wellfoundedness which will be crucial to us later, and which is adapted from Buchholz's notion of a derivation \cite{Buchholz(1995.0)}. Here we work with a structure which encodes in a slightly more precise way the derivation tree generated by some wellfounded $x$.  
\begin{definition}
[Derivation tree]
\label{defn-dt}
The predicate $T(x,d)$ on $X\times X^\ast$ is defined by induction on the length of $d$ as follows: If $[y_0,\ldots,y_{k-1}]$ is the unique sequence consisting exactly of those elements $y$ with $x\succ y$, with $y_i<y_j$ iff $i<j$, then $T(x,d)$ holds precisely when $d=x\ast d_0\ast\ldots\ast d_{k-1}$ and $T(y_i,d_i)$ holds for all $i<k$.
\end{definition}
Intuitively, $T(x,d)$ holds iff $d$ represents the flattening of the derivation tree of $x$ which would be obtained by a depth first search and ordering each child node by its encoding. Take for example the order on $\{1,2,\ldots,7\}$ defined by
\begin{equation}
\label{eqn-ex}
2\succ 4,7 \ \ \ \ \ \ 4\succ 1,3,6 \ \ \ \ \ \ 3\succ 5
\end{equation}
Then we would have $T(2,d)$ iff $d=[2,4,1,3,5,6,7]$. Note that $T(x,d)$ makes sense even when $x$ is not wellfounded: in that case $T(x,d)$ would simply be false for all $d$. However, when it holds for some $d$ then this must be unique.
\begin{lemma}
\label{lem-dt!}
If $T(x,d)$ and $T(x,e)$ then $d=e$.
\end{lemma}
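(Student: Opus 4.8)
The plan is to prove the statement by induction on the length $|d|$ of the first derivation, mirroring the fact that the unfolding of $T(x,d)$ in Definition \ref{defn-dt} is itself given by recursion on $|d|$.

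First I would fix $x$ and consider the sequence $[y_0,\ldots,y_{k-1}]$ of immediate successors of $x$, i.e. those $y$ with $x\succ y$ listed in increasing order of their codes. The crucial observation is that this sequence, and in particular its length $k$, depends only on $x$ and on $\succ$, not on the derivation under consideration; hence both $T(x,d)$ and $T(x,e)$ refer to one and the same list of children. Unfolding the definition, $T(x,d)$ then supplies a decomposition $d=x\ast d_0\ast\ldots\ast d_{k-1}$ with $T(y_i,d_i)$ for each $i<k$, and likewise $T(x,e)$ supplies $e=x\ast e_0\ast\ldots\ast e_{k-1}$ with $T(y_i,e_i)$.

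Next I would verify that the induction hypothesis applies to each block. Since $T(y_i,d_i)$ forces $d_i$ to begin with $y_i$, each $d_i$ is nonempty, so $|d_i|\geq 1$; combined with $d=x\ast d_0\ast\ldots\ast d_{k-1}$ this gives $|d_i|<|d|$. This is exactly what is needed to invoke the induction hypothesis on the triple $(y_i,d_i,e_i)$, yielding $d_i=e_i$ for every $i<k$. Reassembling, $d=x\ast d_0\ast\ldots\ast d_{k-1}=x\ast e_0\ast\ldots\ast e_{k-1}=e$. The base case is $k=0$ (equivalently $|d|=1$): here $x$ has no successors, so both $d$ and $e$ can only be $[x]$, and the claim is immediate.

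I expect the only genuinely delicate point to be the bookkeeping that justifies the recursion, namely confirming $|d_i|<|d|$ so that the induction hypothesis legitimately applies; this reduces entirely to the remark that each sub-derivation is nonempty. It is worth stressing that the argument never requires the decomposition $d=x\ast d_0\ast\ldots\ast d_{k-1}$ to be unique: I simply compare the witnessing decompositions of $d$ and of $e$ blockwise and reassemble, so uniqueness of the split is a consequence of the lemma rather than a prerequisite for its proof.
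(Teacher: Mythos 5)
Your proof is correct; the paper actually states this lemma without proof, treating it as routine, and your induction on $|d|$ (using that the child list $[y_0,\ldots,y_{k-1}]$ depends only on $x$, that each block $d_i$ is nonempty so $|d_i|<|d|$, and blockwise application of the induction hypothesis) is precisely the intended argument.
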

\begin{lemma}
\label{lem-dt-W}
If $T(x,d)$ then $\dc(x)\leq |d|$.
\end{lemma}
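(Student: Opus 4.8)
The plan is to prove the slightly more explicit statement that $T(x,d)$ together with $C_\succ(x\ast a)$ implies $|a|\leq |d|$; since this exhibits $|d|$ as an admissible value of $k$ in Definition \ref{defn-dc}, it simultaneously shows that $\dc(x)$ is well-defined and that $\dc(x)\leq|d|$. Intuitively, $\dc(x)$ measures the length of the longest $\succ$-chain descending from $x$, i.e. the height of the derivation tree, whereas $|d|$ counts all of its nodes (a shared descendant being counted once per path reaching it), so the inequality is evident; the formal argument is a straightforward induction following the recursive structure of $T$, i.e. induction on the length of $d$.

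First I would unfold the definition: by $T(x,d)$ there is a list $[y_0,\ldots,y_{k-1}]$ of exactly those $y$ with $x\succ y$ such that $d=x\ast d_0\ast\ldots\ast d_{k-1}$ with $T(y_i,d_i)$ for each $i<k$, and in particular $|d|=1+\sum_{j<k}|d_j|\geq 1$. Now fix a chain $x\ast a$. If $a=[]$ then $|a|=0\leq|d|$ and we are done. Otherwise write $a=a_0\ast a'$ with $a'$ the tail of $a$; from $C_\succ(x\ast a)$ we read off $x\succ a_0$, so $a_0=y_i$ for some $i<k$. Since $a$ is a suffix of the chain $x\ast a$ it is itself a chain, and $a=y_i\ast a'$, so $C_\succ(y_i\ast a')$ holds. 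As $|d_i|\leq|d|-1<|d|$, the induction hypothesis applies to $T(y_i,d_i)$ and this chain, yielding $|a'|\leq|d_i|$. Hence $|a|=|a'|+1\leq|d_i|+1\leq 1+\sum_{j<k}|d_j|=|d|$, which closes the induction.

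There is no serious obstacle here; the only points requiring care are the bookkeeping ones. One must check that the head $a_0$ of a nonempty chain from $x$ really is one of the children $y_i$ — this is immediate from the chain condition at the first position, together with the defining property that the $y_i$ are exactly the $\succ$-successors of $x$ — and that dropping the head of the chain leaves a chain to which the induction hypothesis can be applied. The decreasing measure for the induction is $|d_i|\leq|d|-1$, which follows from $|d|=1+\sum_{j<k}|d_j|$; the same identity supplies the final inequality $|d_i|+1\leq|d|$ that makes the step go through.
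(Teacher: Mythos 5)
Your proof is correct and follows essentially the same route as the paper's: induction on the length of $d$, identifying the first element of a chain from $x$ with one of the children $y_i$, and applying the induction hypothesis to the sub-derivation $d_i$ strictly contained in $d$. The extra care you take in noting that the argument simultaneously establishes well-definedness of $\dc(x)$ is a reasonable refinement but does not change the approach.
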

\begin{proof}
Induction on the length of $d$. Suppose that $x\succ x_1\succ\ldots\succ x_k$. Then we have $T(x_1,e)$ for some $e$ contained in $d$, and assuming inductively that $k-1\leq \dc(x_1)\leq |e|<|d|$ we obtain $k\leq |d|$ and thus $\dc(x)\leq |d|$. 
\end{proof}
\begin{theorem}
\label{thm-W-dt}
If $\succ$ is finitely branching then $W(x)$ holds iff $(\exists d) T(x,d)$.
\end{theorem}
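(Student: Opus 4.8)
The statement is a biconditional, and I would prove the two implications separately: the right-to-left direction is essentially immediate from the preceding lemma, while the left-to-right direction carries all the weight and requires a nonconstructive argument.

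For the direction $(\exists d)\,T(x,d)\to W(x)$, suppose $T(x,d)$ holds for some $d$. By Lemma \ref{lem-dt-W} the derivational complexity $\dc(x)$ is bounded by $|d|$, so every $\succ$-chain starting at $x$ has length at most $|d|$. Now take any $\alpha$ with $\alpha_0=x$. If we had $\alpha_n\succ\alpha_{n+1}$ for all $n\leq|d|$, then $[\alpha_0,\ldots,\alpha_{|d|+1}]$ would be a $\succ$-chain of length exceeding $|d|$, contradicting the bound; hence $\alpha_n\nsucc\alpha_{n+1}$ for some $n$, which is exactly $W(\alpha)$. As $\alpha$ was arbitrary, $W(x)$ follows.

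For the substantive direction $W(x)\to(\exists d)\,T(x,d)$, I would argue by contradiction, assuming $W(x)$ yet that no $d$ satisfies $T(x,d)$. The key sub-claim is that whenever $W(z)$ holds and $z$ admits no derivation tree, there is some $z'$ with $z\succ z'$ inheriting \emph{both} properties. To establish it, I use finite branching to obtain the sorted successor list $[y_0,\ldots,y_{k-1}]$ of $z$. Each $y_i$ is itself wellfounded, since prepending $z$ to an infinite $\succ$-descending sequence out of $y_i$ would yield one out of $z$ and contradict $W(z)$. Moreover, were every $y_i$ to possess a derivation tree $d_i$, then the defining clause of $T$ would make $z\ast d_0\ast\cdots\ast d_{k-1}$ a witness for $T(z,\cdot)$, contrary to assumption. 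Hence at least one $y_i$ is wellfounded and lacks a derivation tree, and we take it as $z'$.

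Iterating this sub-claim via dependent choice, starting from $z_0=x$, produces an infinite chain $x=z_0\succ z_1\succ z_2\succ\cdots$; viewing it as a sequence $\alpha$ with $\alpha_0=x$ directly contradicts $W(x)$, completing the proof. The main obstacle is isolated entirely in this last direction: it is the appeal to dependent choice — a minimal-bad-sequence–style construction — to extract the infinite descending sequence. This is precisely the nonconstructive core of the theorem, and, as the later sections make clear, the step whose computational content will be supplied by bar recursion. It is worth stressing that finite branching is used twice and essentially: it guarantees that the successor list, and hence the predicate $T$ itself, is well defined, and it allows the finitely many subderivations $d_i$ to be concatenated into a single finite $d$.
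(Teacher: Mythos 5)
Your proof is correct and follows the paper's argument: the easy direction via Lemma \ref{lem-dt-W}, and the hard direction by extracting an infinite descending branch from the assumption that no derivation tree exists. The only cosmetic difference is that the paper invokes K\"{o}nig's lemma as a black box where you inline its proof as an explicit dependent-choice construction (tracking the property ``wellfounded but lacking a derivation tree'' down the finitely branching successor lists), which is the same argument unwound.
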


\begin{proof}
One direction follows immediately from Lemma \ref{lem-dt-W}. For the other, $(\forall d)\neg T(x,d)$ would imply that the derivation tree of $x$ is infinite, and so by K\"{o}nig's lemma this tree must have an infinite branch. But that would contradict $W(x)$.
\end{proof}

\subsection{Computing derivation trees}
\label{sec-finitary-comp}

The focus of this article will be on the construction of explicit functions $\Phi:X\to X^\ast$ for wellfounded relations $\succ$ such that $(\forall x)T(x,\Phi(x))$. Such a $\Phi$ will be called a \emph{derivation function} for $\succ$. Whenever $\Phi$ is a derivation function for $\succ$, by Lemma \ref{lem-dt-W} in particular it follows that the map $\lambda x.|\Phi(x)|$ bounds the derivational complexity of $\succ$. Therefore, whenever we can guarantee that $\Phi$ can be defined in some restricted class of functions, we can produce a subrecursive bound the derivational complexity of $\succ$.

Note that for any finitely branching $\succ$, provided we know in advance that $x$ is wellfounded, its derivation tree can be easily computed by simply implementing a depth first search. However, a much stronger result would be to show that the computation of a derivation tree can be defined in some \emph{subrecursive} calculus, which takes into account the strength of the system in which $W(x)$ can be proved.

In this section we give a short and simple result of this kind, where we analyse the statement
\begin{equation}
\label{eqn-KL}
\mbox{if $\succ$ is finitely branching and wellfounded then $(\forall x)(\exists d) T(x,d)$}
\end{equation}
The statement follows as in Theorem \ref{thm-W-dt} from an application of K\"{o}nig's lemma. We are interested in giving (\ref{eqn-KL}) a \emph{computational interpretation}, namely a construction of a derivation function for $\succ$ which takes as parameters some functionals which give a computational interpretation to the premise of (\ref{eqn-KL}). This leads us to the following key definitions:
\begin{definition}
\label{defn-moduli}
\begin{enumerate}[(a)]

\item A \emph{branching modulus} for $\succ$ is a function $c:X\to X^\ast$ satisfying
\begin{equation*}
x\succ y\leftrightarrow y\in c(x)
\end{equation*}
for all $x,y\in X$. We assume w.l.o.g. that $c(x)$ is ordered and contains no repetitions.

\item A \emph{modulus of wellfoundedness} for $\succ$ is a function $\omega:X^\NN\to\NN$ satisfying
\begin{equation*}
(\exists i<\omega(\alpha))(\alpha_i\nsucc \alpha_{i+1})
\end{equation*}
for all $\alpha\in X^\NN$.
\end{enumerate}
\end{definition}
Moduli of wellfoundedness are also been studied in \cite{BerOliSte(2015.0)} in the context of the Podelski-Rybalchenko termination theorem, and we take our terminology from them.

Given a branching modulus and $x\in X$, one can easily compute the derivation tree $d$ for $x$ by implementing a depth first search. We now show that given, in addition, a modulus of wellfoundedness, we can give a subrecursive definition of the derivation function in $\sT$ plus bar recursion, where the latter is a recursion scheme over wellfounded trees. There are numerous different variants of bar recursion (see \cite{Powell(2014.0)}), but here we will be primarily concerned with the original version due to Spector \cite{Spector(1962.0)}.
\begin{definition}
[Bar recursion]
The constant $B^{\rho,\tau}$ of bar recursion of type $\rho,\tau$ is characterised by the following defining equation (cf. Section \ref{sec-intro-prereq} for notation):
\begin{equation*}
B^{\rho,\tau}_{\omega,g,h}(a):=\begin{cases} g(a) & \mbox{if $\omega(\ext{a})<|a|$}\\ h a(\lambda x. B^{\rho,\tau}_{\omega,g,h}(a\ast x)) & \mbox{otherwise}\end{cases}
\end{equation*}
where $a\in \rho^\ast$ and the other parameters have types $\omega:\rho^\NN\to\NN$, $g:\rho^\ast\to \tau$ and $h:\rho^\ast\to (\rho\to \tau)\to \tau$.
\end{definition}
Before we give our construction, we need some notation for some simple recursive operations on sequences. We use the symbol $\cnc$ to denote iterated list concatenation. So for $a=[a_0,\ldots,a_{k-1}]$ we have 
\begin{equation*}
\cnc_{x\in a} x:=a_0\ast\ldots\ast a_{k-1}.
\end{equation*}
We abuse this symbol just like a summation symbol, so for example
\begin{equation*}
\cnc_{x\in a} p(x):=p(a_0)\ast\ldots\ast p(a_{k-1})
\end{equation*}
and so on. Note that this operation is definable in $\sT_0$ using recursion over the length of $|a|$.

\begin{lemma}
\label{lem-treecnc}
Let $x\in X$ and suppose that $p:X\to X^\ast$ is a function satisfying $T(y,p(y))$ for all $y\prec x$. Then
\begin{equation*}
d:=x\ast\cnc_{y\in c(x)} p(y)
\end{equation*}
satisfies $T(x,d)$ whenever $c$ is a branching modulus for $\succ$.
\end{lemma}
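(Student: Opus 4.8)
The plan is simply to unfold the definition of the predicate $T$ (Definition \ref{defn-dt}) and to check that the construction $d$ matches it clause-by-clause. No separate induction on $|d|$ is required here: the inductive content of the statement is already supplied by the hypothesis that $T(y,p(y))$ holds for every $y\prec x$, and the lemma is in effect just the single recursion step of $T$ with all recursive calls discharged by assumption.

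First I would write $c(x)=[y_0,\ldots,y_{k-1}]$. Since $c$ is a branching modulus we have $x\succ y\leftrightarrow y\in c(x)$, and by assumption $c(x)$ is ordered and contains no repetitions; hence $[y_0,\ldots,y_{k-1}]$ is precisely the unique sorted sequence consisting exactly of those $y$ with $x\succ y$ that appears in Definition \ref{defn-dt}. Next I would observe that by the definition of iterated concatenation $\cnc_{y\in c(x)}p(y)=p(y_0)\ast\cdots\ast p(y_{k-1})$, so that $d=x\ast p(y_0)\ast\cdots\ast p(y_{k-1})$. It then remains to put $d_i:=p(y_i)$ and verify the two requirements of Definition \ref{defn-dt}: that $d=x\ast d_0\ast\cdots\ast d_{k-1}$, which is immediate from the previous display, and that $T(y_i,d_i)$ holds for each $i<k$. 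For the latter, each $y_i\in c(x)$ satisfies $x\succ y_i$, i.e. $y_i\prec x$, so the hypothesis applies and yields $T(y_i,p(y_i))=T(y_i,d_i)$. By the defining clause for $T(x,d)$ this is exactly $T(x,d)$, as required.

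The main (and essentially only) subtlety is to make sure that the ordered, repetition-free sequence $c(x)$ furnished by the branching modulus genuinely coincides with the canonical sorted child-sequence $[y_0,\ldots,y_{k-1}]$ used in Definition \ref{defn-dt}, and that membership $y\in c(x)$ is what licenses the appeal to the hypothesis on $p$; both facts follow at once from the defining property $x\succ y\leftrightarrow y\in c(x)$ together with the normalisation assumption on $c$. Beyond this matching-up there is no real obstacle: the lemma is purely a matter of reading off the definition of $T$.
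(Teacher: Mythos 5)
Your proof is correct and matches the paper's approach: the paper simply states that the lemma follows directly from Definitions \ref{defn-dt} and \ref{defn-moduli}(a), and your write-up is exactly the clause-by-clause unfolding that this terse justification leaves implicit, including the key observation that the normalised $c(x)$ coincides with the canonical child-sequence in the definition of $T$.
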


\begin{proof}
Directly from the Definitions \ref{defn-dt} and \ref{defn-moduli} (a).
\end{proof}

\begin{theorem}
\label{thm-tree}
In $\sT_0+\BR^{X,X^\ast}$ we can define a function $\Psi_{c,\omega} X^\ast\to X^\ast$ which takes parameters $c:X\to X^\ast$ and $\omega:X^\NN\to\NN$ and satisfies 
\begin{equation*}
\Psi_{c,\omega}(a):=\begin{cases}[] & \mbox{if $|a|=0$ or $\omega(\ext{a})<|a|$}\\ \bar{a}\ast\cnc_{y\in c(\bar{a})}\Psi_{c,\omega}(a\ast y) & \mbox{otherwise}.\end{cases}
\end{equation*}
Moreover, if $c$ resp. $\omega$ is a branching modulus resp. modulus of wellfoundedness for $\succ$, then the function $\lambda x.\Psi_{c,\omega}([x])$ is a derivation function for $\succ$. 
\end{theorem}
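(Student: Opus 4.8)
The plan is to prove this theorem in two parts. First I need to establish that the functional $\Psi_{c,\omega}$ can be defined in $\sT_0 + \BR^{X,X^\ast}$ — this is really a matter of instantiating Spector's bar recursion constant with the right parameters. Then, assuming $c$ and $\omega$ are genuine moduli, I need to show $T(x, \Psi_{c,\omega}([x]))$ holds, which will be a bar induction / induction-on-a-wellfounded-tree argument relying on Lemma~\ref{lem-treecnc}.

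Let me think about the definability part. The recursion equation for $\Psi$ has the form: base case (empty list or past the wellfoundedness bound) returns $[]$, and the recursive case returns $\bar{a} \ast \cnc_{y \in c(\bar{a})} \Psi_{c,\omega}(a \ast y)$. Now compare with $B^{\rho,\tau}$: the bar recursor gives $g(a)$ when $\omega(\ext{a}) < |a|$ and $ha(\lambda x. B_{\omega,g,h}(a\ast x))$ otherwise. So the structure matches with $\rho = X$, $\tau = X^\ast$. The issue is that $\Psi$ has one extra base case ($|a|=0$) and that in the recursive case, rather than feeding the continuation a single $x$, it iterates concatenation over $y \in c(\bar{a})$.

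The key structural observation must be how to encode "iterate over $c(\bar a)$" through the single-argument continuation $\lambda x. B(a \ast x)$. The continuation $G := \lambda y. B_{\omega,g,h}(a\ast y)$ is exactly a map $X \to X^\ast$, so I can set $h\,a\,G := \bar a \ast \cnc_{y\in c(\bar a)} G(y)$, where the iterated concatenation over the finite sequence $c(\bar a)$ is itself $\sT_0$-definable (as the paper notes about $\cnc$). For $g$, I take $g(a) := []$, and the $|a|=0$ base case I fold into $h$ (since when $a=[]$ we have $\bar a = 0_X$ and $c(0_X)$ may be nonempty; so I should make $h$ return $[]$ when $|a|=0$, using a definable length test). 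Thus $\Psi_{c,\omega} := B^{X,X^\ast}_{\omega, g, h}$ for these particular $g,h$, and this lives in $\sT_0 + \BR^{X,X^\ast}$. I expect this bookkeeping — precisely matching the two base cases to a single bar recursor — to be the main technical obstacle, though it is a routine matter once the continuation is recognized as a $\map$.

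For the second part, assume $c$ is a branching modulus and $\omega$ a modulus of wellfoundedness. I want $T(x,\Psi_{c,\omega}([x]))$ for all $x$. The natural approach is bar induction on $\Psi$ itself: I claim that for every $a$ with $\bar a$ wellfounded (more precisely, for every $a$ ending in the relevant position of a chain), $\Psi_{c,\omega}(a)$ satisfies $T(\bar a, \Psi_{c,\omega}(a))$ whenever $|a|>0$ and we are not past the bound. Since $\omega$ is a modulus of wellfoundedness, bar recursion terminates: along any sequence $\alpha$ extending $a$, we eventually reach a point where $\omega(\ext{\,\cdot\,}) < $ length, so the recursion is well-founded over the secured tree. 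The induction is then: at a leaf (where the bound fires) the relevant node has no $\succ$-successors in the chain, and at an internal node I apply Lemma~\ref{lem-treecnc} with $p := \lambda y.\Psi_{c,\omega}(a\ast y)$, using the induction hypothesis $T(y, p(y))$ for each $y \in c(\bar a)$, i.e. each $y \prec \bar a$. Specializing to $a = [x]$ gives $T(x, \Psi_{c,\omega}([x]))$, so $\lambda x.\Psi_{c,\omega}([x])$ is a derivation function. The subtle point to get right here is that the bar-recursive termination supplied by $\omega$ coincides with the wellfoundedness needed for the inductive verification of $T$ — that whenever the bound does not fire at $a$, the successors $y \in c(\bar a)$ are genuine $\succ$-predecessors for which the recursion continues — and this is exactly where the hypotheses that $c,\omega$ are real moduli (not arbitrary functions) are used.
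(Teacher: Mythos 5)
Your proposal is correct and follows essentially the same route as the paper. The definability half is exactly what the paper dismisses as ``a simple exercise'': your instantiation $g(a):=[]$ and $h\,a\,G:=\bar a\ast\cnc_{y\in c(\bar a)}G(y)$, with the $|a|=0$ test folded into $h$, is the same construction the paper carries out in the appendix for the more elaborate Lemma~\ref{lem-absdef}. For the verification, the paper proves the single implication (\ref{eqn-tree}) --- which is precisely the contrapositive of your induction step, obtained from Lemma~\ref{lem-treecnc} --- and then argues classically by contradiction, using dependent choice to manufacture an infinite $\succ$-descending sequence; you instead run the direct bar induction over the tree of unsecured sequences. The paper explicitly remarks that the classical proof could be converted into an intuitionistic one via bar induction, so nothing substantive is gained or lost either way. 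One small inaccuracy to repair: at a secured leaf you say ``the relevant node has no $\succ$-successors in the chain,'' but a secured $\bar a$ may perfectly well have $\succ$-successors, in which case $T(\bar a,[])$ is false. The correct observation (and the one the paper makes) is that the leaf case simply never occurs along a $\succ$-chain: $C_\succ(a)$ forces $\omega(\ext{a})\geq|a|$, since otherwise the modulus would produce some $i$ with $i+1<|a|$ and $a_i\nsucc a_{i+1}$. With that one-line fix your base case is vacuous for chains and the induction goes through.
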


\begin{proof}
That $\Psi$ is definable in $\sT_0+\BR^{X,X^\ast}$ is a simple exercise, and we omit it here (though definability results in later sections are included in full in the appendix). For the verification proof, we first show that for any sequence satisfying $|a|>0$ and $C_\succ(a)$ we have:
\begin{equation}
\label{eqn-tree}
\neg T(\bar{a},\Psi_{c,\omega}(a))\to (\exists y\prec a) \neg T(y,\Psi_{c,\omega}(a\ast y))
\end{equation}
To see this, note that $C_\succ(a)$ implies that $\omega(\ext{a})\geq |a|$, else there would be some $i,i+1<|a|$ with $a_i\nsucc a_{i+1}$. Therefore by the contrapositive of Lemma \ref{lem-treecnc} we obtain (\ref{eqn-tree}).

Now, suppose that there exists some $x$ such that $\neg T(x,\Psi_{c,\omega}([x]))$. Then by dependent choice together with (\ref{eqn-tree}) there exists some infinite descending sequence $\alpha_0\succ\alpha_1\succ\ldots$, contradicting wellfoundedness of $\succ$. Thus $T(x,\Psi_{c,\omega}([x]))$ holds for all $x$, and we're done.
\end{proof}

The above theorem is not deep it itself, but we included as a simple illustration of the results which will follow. Note the verification proof uses classical logic together with dependent choice: We could convert this into an intuitionistic proof which instead uses some variant of bar induction, as is typically the case for program extraction theorems. However, in this paper we have no deeper foundational goals which would require the verification of our extracted terms to be formalisable in a weak intuitionistic theory, so we stick to classical logic as it is usually more intuitive.

\section{Abstract path orders}
\label{sec-simp}

Path orders form one of the earliest proof rules for termination, and are a central concept in the theory of term rewriting. Today, a huge variety of different path orders have been developed, ranging from the general - such as the unified ordering of \cite{YamKS(2013.0)} - which focus on the common structure shared by termination orders, to the highly specialised - such as the polynomial path ordering of \cite{AvanMos(2013.0)} - which aim to capture a very precise class of terminating programs. We talk about path orders in more detail in Section \ref{sec-trs}, but for now we give a simple explanation which helps motivate the abstract principle studied here.

Very roughly, path orders capture `termination via minimal sequences'. Consider the Ackermann-P\'{e}ter function, which recursively defined by the rules
\begin{equation*}
\begin{aligned}
A(0,n)&\succ n+1 \\
A(m,0)&\succ A(m-1,1)\\
A(m+1,n+1)&\succ A(m,A(m+1,n))
\end{aligned}
\end{equation*}
Suppose that $A(m,n)$ is not wellfounded i.e. triggers an infinite computation. Then either $A(m,n-1)$ is not wellfounded, or there is some $k$ such that $A(m-1,k)$ is not wellfounded. In other words, there is some $(m',n')$ lexicographically less than $(m,n)$ such that $A(m',n')$ is not wellfounded. Thus non-wellfoundedness of $A(m,n)$ gives rise to an infinite sequence
\begin{equation}
\label{eqn-minimal}
A(m_0,n_0)\gg A(m_1,n_2)\gg A(m_2,n_2)\gg\ldots
\end{equation}
where $A(m,n)\gg A(m',n')$ denotes that $(m',n')$ is lexicographically smaller than $(m,n)$. Informally speaking, (\ref{eqn-minimal}) plays the role of a minimal sequence, in the sense that it represents instances of $A$ whose arguments i.e. subterms are wellfounded, but which are themselves non-wellfounded. Since the relation $\gg$ is wellfounded, then we have proven totality of the Ackermann function.

On an abstract level, path orders are a proof rule which implement the idea that termination of a program can be inferred from wellfoundedness of minimal sequences.
\begin{definition}
Let $\rhd$ be a primitive recursive relation on $X$ which is \emph{inductively wellfounded}, by which we mean that induction and recursion over $\rhd$ is available.
\end{definition}
Inductive wellfoundedness is equivalent to classical wellfoundedness as defined in Section \ref{sec-finitary}. However, from a computational point of view the two differ: Classical wellfoundedness is realized by some modulus of wellfoundedness of type $X^\NN\to\NN$, while the computational analogue of inductive wellfoundedness will be the recursor $\rec^\rhd$. Note that a modulus of wellfoundedness for $\rhd$ is easily computable in $\rec^{\rhd}$, but defining $\rec^\rhd$ in some modulus of wellfoundedness for $\rhd$ would seem to require bar recursion in addition.

The reason that we choose to $\rhd$ to be inductively wellfounded is that when $X$ is some set of terms, $\rhd$ usually represents the subterm relation, recursion over which is trivially definable in $\sT_0$. A key concept in our abstract termination principle is the notion of a \emph{minimal sequence}. The precise definition is as follows:

\begin{definition}
[Minimal sequence]
An infinite sequence $\alpha\in X^\NN$ is minimal if $W(y)$ for all $y\lhd \alpha_n$ and $n\in\NN$.
\end{definition}

In addition to $\succ$ and $\rhd$ we consider some auxiliary order $\gg$, which interacts with the other relations in a specific way, which we call a \emph{decomposition} after the similar notion in \cite{FerZan(1995.0)}.

\begin{definition}
[Decomposition]
\label{defn-decomp}
A primitive recursive relation $\gg$ on $X$ is called a decomposition of $\succ$ w.r.t $\rhd$ if it satisfies
\begin{enumerate}[(i)]

\item\label{decompi} whenever $x\succ y$ then either $x\gg y$ or there exists some $z\lhd x$ such that $z\succeq y$;

\item\label{decompii} whenever $x\gg y$ and $y\rhd z$ then $x\succ z$.

\end{enumerate}
\end{definition}
We are now ready to state and prove our main abstract termination principle, which is closely related to Theorem 1 of \cite{GLar(2001.0)}.
\begin{theorem}
[Abstract termination principle]
\label{thm-abs}
Suppose that $\gg$ is a decomposition of $\succ$ w.r.t. $\rhd$ which is classically wellfounded on the set of all minimal sequences. Then $\succ$ is wellfounded on $X$.
\end{theorem}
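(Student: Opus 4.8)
The plan is to argue by contradiction using a minimal-bad-sequence construction, building an infinite $\gg$-descending sequence that is moreover \emph{minimal}, thereby contradicting the hypothesis that $\gg$ is classically wellfounded on all minimal sequences. So I would suppose $\succ$ is not wellfounded, i.e. $\neg W(x_0)$ for some $x_0$.

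First I would record a minimalization lemma obtained from the inductive wellfoundedness of $\rhd$: below any non-wellfounded element $b$ there is a \emph{minimal} non-wellfounded element, i.e. some $c$ with $c \trianglelefteq b$, $\neg W(c)$, and $W(y)$ for all $y \lhd c$. This follows by $\rhd$-induction (equivalently, by taking a $\rhd$-minimal member of the non-wellfounded elements below $b$, using transitivity of $\rhd$). In particular the set of non-wellfounded elements has such a minimal member, which supplies a starting point $\gamma_0$.

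The heart of the argument is a descent lemma: \emph{if $\gamma$ is minimal non-wellfounded then there is a minimal non-wellfounded $\gamma'$ with $\gamma \gg \gamma'$}. I would prove it via an auxiliary observation: if $\gamma$ is minimal non-wellfounded and $\gamma \succ w$ with $w$ non-wellfounded, then in fact $\gamma \gg w$. Indeed, decomposition clause (\ref{decompi}) gives either $\gamma \gg w$ (as desired) or some $z \lhd \gamma$ with $z \succeq w$; but $z \lhd \gamma$ forces $W(z)$ by minimality of $\gamma$, while $z \succeq w$ together with $\neg W(w)$ gives a non-wellfounded sequence starting at $z$, contradicting $W(z)$. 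With this in hand, take any infinite descent $\gamma = \beta_0 \succ \beta_1 \succ \cdots$ (which exists since $\neg W(\gamma)$); then $\beta_1$ is non-wellfounded, so $\gamma \gg \beta_1$ by the observation. Now minimalize below $\beta_1$ to obtain a minimal non-wellfounded $\gamma' \trianglelefteq \beta_1$. If $\gamma' = \beta_1$ we are done; otherwise $\beta_1 \rhd \gamma'$, so clause (\ref{decompii}) applied to $\gamma \gg \beta_1$ yields $\gamma \succ \gamma'$, and since $\gamma'$ is non-wellfounded the auxiliary observation upgrades this to $\gamma \gg \gamma'$.

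Finally I would iterate the descent lemma by dependent choice, starting from $\gamma_0$, to produce an infinite sequence $\gamma_0 \gg \gamma_1 \gg \gamma_2 \gg \cdots$ in which every $\gamma_n$ is minimal non-wellfounded. Since each $\gamma_n$ has all its $\lhd$-predecessors wellfounded, the sequence $(\gamma_n)$ is a minimal sequence; yet $\gamma_n \gg \gamma_{n+1}$ holds for every $n$, contradicting that $\gg$ is classically wellfounded on minimal sequences. Hence $\succ$ is wellfounded. The main obstacle is the descent step: making the relation land in $\gg$ rather than merely $\succ$ requires repeatedly invoking clause (\ref{decompi}) and eliminating the subterm alternative via minimality, while one must simultaneously re-minimalize at each stage so that the constructed sequence is genuinely minimal, which is precisely what licenses the final appeal to the hypothesis.
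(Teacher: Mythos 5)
Your proposal follows essentially the same minimal-bad-sequence strategy as the paper's proof: use $\rhd$-induction to pass to elements all of whose $\lhd$-predecessors are wellfounded, use decomposition property (i) together with that minimality to upgrade a bad $\succ$-successor to a $\gg$-successor, use property (ii) to keep the next element minimal, and close with dependent choice against the hypothesis on minimal sequences. The one step that does not go through as written is in your descent lemma: after minimalizing below $\beta_1$ you assert that either $\gamma'=\beta_1$ or $\beta_1\rhd\gamma'$, but the minimal non-wellfounded element below $\beta_1$ may lie several $\lhd$-steps down (your $\trianglelefteq$ must be the reflexive--transitive closure for the minimalization lemma to hold), so a single application of property (ii) does not reach $\gamma'$. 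The repair uses only tools you already have: the $\rhd$-induction that produces $\gamma'$ yields a finite chain $\beta_1\rhd z_1\rhd\cdots\rhd z_k=\gamma'$ of non-wellfounded elements, and alternating property (ii) (to get $\gamma\succ z_{j+1}$ from $\gamma\gg z_j$) with your auxiliary observation (to upgrade this to $\gamma\gg z_{j+1}$, using $\neg W(z_{j+1})$) propagates $\gg$ down the chain. Alternatively, and closer to the paper, one can avoid the chain entirely by taking $\gamma'$ to be a $\rhd$-minimal element of the set $\{w : \gamma\succ w\wedge\neg W(w)\}$ (nonempty because of $\beta_1$): property (i) plus minimality of $\gamma$ gives $\gamma\gg\gamma'$ directly, and property (ii) then shows every $z\lhd\gamma'$ satisfies $\gamma\succ z$, hence $W(z)$ by the $\rhd$-minimality of $\gamma'$ in that set, so $\gamma'$ is again minimal non-wellfounded.
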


\begin{proof}
Defining $A:=\{x\in X\; : \; (\forall y\lhd x) W(y)\}$, we claim that for any nonempty sequence $a\in A^\ast$ satisfying $C_\gg(a)$ we have:
\begin{equation}
\label{eqn-abs}
\neg W(\bar{a})\to (\exists y\ll \bar{a})(\neg W(y)\wedge y\in A).
\end{equation}
To see this, observe that $\neg W(\bar{a})$ implies that the set
\begin{equation*}
S_{\bar{a}}:=\{x\in X\; | \; x\prec a\wedge \neg W(x)\}
\end{equation*}
is nonempty. Thus by the minimum principle on $\rhd$, which follows classically using induction on $\rhd$, $S_{\bar{a}}$ has some minimal element $y$. Now, it follows that $\bar{a}\gg y$, otherwise, by decomposition property (\ref{decompi}) we would have $\bar{a}\rhd z\succeq y$ for some $z$, and since $\bar{a}\in A$ this contradicts $\neg W(y)$. But using property (\ref{decompii}) we can therefore also show that $y\in A$: since $\bar{a}\gg y$ then for any $z\lhd y$ we have $\bar{a}\succ z$, and $\neg W(z)$ would imply that $z\in S_{\bar{a}}$, contradicting minimality of $y$. This proves the claim.

For the main result, suppose that $\neg W(x)$ holds for some $x$, and define $\alpha_0$ to be the minimal such $x$ w.r.t $\rhd$. Then we have $\alpha_0\in A$, and $C_\gg([\alpha_0])$ trivially, and by applying dependent choice together with (\ref{eqn-abs}) we obtain an infinite sequence $\alpha_0\gg \alpha_1\gg\alpha_2\gg\ldots$ with $\alpha_i\in A$ for all $i$. But the assumption that $\gg$ is wellfounded on minimal sequences.
\end{proof}

We now give a computational interpretation of Theorem \ref{thm-abs}, in the case where both $\succ$ and $\rhd$ are finitely branching. Similarly to before this assumption will be represented by a pair of branching moduli $c_\succ$ and $c_\rhd$. The computational analogue of inductive wellfoundedness of $\rhd$ will be access to wellfounded recursion over $\rhd$, so it remains to formulate our main assumption that $\gg$ is classically wellfounded on the set of all minimal sequences.

\begin{definition}
The predicate $T_\rhd(x,u)$ on $X\times\arr{X}$ is defined as follows:
\begin{equation*}
T_\rhd(x,u):\equiv |u|=k\wedge (\forall i<k) T(y_i,u_i)
\end{equation*}
where $[y_0,\ldots,y_{k-1}]=c_\rhd(x)$.
\end{definition}

Continuing with our earlier example (\ref{eqn-ex}), suppose that $x\rhd y$ only when $y$ is a proper divisor of $x$. Then $T_\rhd(6,[[1],[3,5]])$, since $1$ and $3$ are the only proper subdivisors of $6$ and both $T(1,[1])$ and $T(3,[3,5])$. 

\begin{lemma}
A sequence $\alpha\in X^\NN$ is minimal iff there exists a sequence $\beta\in (\arr{X})^\NN$ such that $T_\rhd(\alpha_n,\beta_n)$ holds for all $n\in\NN$.
\end{lemma}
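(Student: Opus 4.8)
The plan is to prove the equivalence by unfolding both sides and using Theorem \ref{thm-W-dt}, which characterises $W(x)$ as $(\exists d)T(x,d)$ for finitely branching relations. The key observation is that minimality of $\alpha$ asserts $W(y)$ for every $y\lhd\alpha_n$ (over all $n$), and that $T_\rhd(\alpha_n,\beta_n)$ is precisely the statement that $\beta_n$ packages together derivation trees $T(y_i,(\beta_n)_i)$ for each immediate $\rhd$-predecessor $y_i$ of $\alpha_n$, i.e.\ for each $y_i\in c_\rhd(\alpha_n)$. So the whole statement reduces to the claim that $W(y)$ holds for all $y\lhd\alpha_n$ if and only if we can find witnessing derivation trees for those $y$.

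First I would prove the forward direction. Assume $\alpha$ is minimal, so $W(y)$ for every $y\lhd\alpha_n$ and every $n$. Fix $n$ and write $c_\rhd(\alpha_n)=[y_0,\ldots,y_{k-1}]$; since $c_\rhd$ is a branching modulus for $\rhd$, these $y_i$ are exactly the elements with $\alpha_n\rhd y_i$, hence $W(y_i)$ holds for each $i<k$. By Theorem \ref{thm-W-dt} applied to $\rhd$ (which is finitely branching), there exists $u_i$ with $T(y_i,u_i)$, and setting $\beta_n:=[u_0,\ldots,u_{k-1}]$ gives $T_\rhd(\alpha_n,\beta_n)$. Doing this for each $n$ yields the required $\beta$. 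Conversely, suppose such a $\beta$ exists. For any $y\lhd\alpha_n$, the branching modulus property gives $y=y_i$ for some $i<k$ where $[y_0,\ldots,y_{k-1}]=c_\rhd(\alpha_n)$, and $T_\rhd(\alpha_n,\beta_n)$ supplies $T(y_i,(\beta_n)_i)$; by Theorem \ref{thm-W-dt} this gives $W(y_i)=W(y)$. Since $n$ and $y$ were arbitrary, $\alpha$ is minimal.

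The main subtlety to be careful about is the bookkeeping between the two equivalent descriptions of the immediate $\rhd$-predecessors of $\alpha_n$: the definition of minimality quantifies over all $y$ with $y\lhd\alpha_n$, whereas $T_\rhd$ refers to the explicit enumeration $c_\rhd(\alpha_n)=[y_0,\ldots,y_{k-1}]$. The equivalence of these two hinges entirely on $c_\rhd$ being a \emph{branching} modulus, namely that $\alpha_n\rhd y\leftrightarrow y\in c_\rhd(\alpha_n)$, so every $\rhd$-predecessor appears in the list and conversely. Strictly speaking this lemma should be read as presupposing that $c_\rhd$ is a branching modulus for $\rhd$, matching the standing assumptions of the computational setting introduced just before the definition of $T_\rhd$. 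Beyond this matching of presentations, no real difficulty arises; the content is entirely carried by Theorem \ref{thm-W-dt}.
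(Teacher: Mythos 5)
Your proof is correct and is essentially the paper's own argument (the paper's entire proof reads ``Directly from Theorem \ref{thm-W-dt}''), with the definitions of minimality and $T_\rhd$ unfolded in exactly the way the paper intends. One small correction: Theorem \ref{thm-W-dt} is applied to $\succ$, not to $\rhd$ --- the trees $T(y_i,u_i)$ packaged by $T_\rhd(\alpha_n,\beta_n)$ are $\succ$-derivation trees and the $W(y_i)$ required by minimality is $\succ$-wellfoundedness, while the finite branching of $\rhd$ enters only in guaranteeing that $c_\rhd(\alpha_n)$ is a finite enumeration of the $\rhd$-predecessors.
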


\begin{proof}
Directly from Theorem \ref{thm-W-dt}.
\end{proof}

This syntactic characterisation for finitely branching orders informs the following adaptation of the modulus of wellfoundedness:
\begin{definition}
\label{defn-minwf}
A modulus of minimal wellfoundedness for $\succ$, $\rhd$ and $\gg$ is a function $\omega:(X\times \arr{X})^\NN\to\NN$ satisfying
\begin{equation*}
(\forall n) T_\rhd(\alpha_n,\beta_n)\to (\exists i<\omega(\alpha,\beta))(\alpha_i\ngg\alpha_{i+1})
\end{equation*}
where by for clarity we represent the two components of $(X\times \arr{X})^\NN$ separately as $\alpha\in X^\NN$ and $\beta\in (\arr{X})^\NN$, and write e.g. $\omega(\alpha,\beta)$ instead of $\omega(\lambda i.\pair{\alpha_i,\beta_i})$.
\end{definition}
In the construction that follows we denote by $\pnc$ the usual map function i.e. given $a\in X^\ast$ and $p:X\to Y$ we have
\begin{equation*}
\pnc_{x\in a}p(x):=[p(a_0),\ldots,p(a_{k-1})]\in Y^\ast
\end{equation*}
where $a=[a_0,\ldots,a_{k-1}]$. In addition, given two lists $a,a'\in X^\ast$ we denote by $a\cap a'$ the ordered intersection of these lists. The following lemma follows directly from the definitions:

\begin{lemma}
\label{lem-map}
Let $x\in X$ and suppose that $q:X\to X^\ast$ is a function satisfying $T(y,q(y))$ for all $y\lhd x$. Then
\begin{equation*}
u:=\pnc_{y\in c_\rhd(x)}q(y)
\end{equation*}
satisfies $T_\rhd(x,u)$ whenever $c_\rhd$ is a branching modulus for $\rhd$.
\end{lemma}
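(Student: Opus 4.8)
The plan is to simply unwind the definitions, since both sides of the asserted equivalence are built directly from the list $c_\rhd(x)$. First I would write $c_\rhd(x) = [y_0, \ldots, y_{k-1}]$, which by the standing assumption is ordered and contains no repetitions, so that this is exactly the enumeration of the $\rhd$-predecessors of $x$ appearing in the definition of $T_\rhd$. By the definition of the map operation $\pnc$, we then have $u = [q(y_0), \ldots, q(y_{k-1})]$, whence $|u| = |c_\rhd(x)| = k$ and $u_i = q(y_i)$ for each $i < k$. This already establishes the first conjunct $|u| = k$ of $T_\rhd(x,u)$.

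For the second conjunct I would argue pointwise. Fix $i < k$. Since $y_i \in c_\rhd(x)$ and $c_\rhd$ is a branching modulus for $\rhd$, the defining equivalence $x \rhd y_i \leftrightarrow y_i \in c_\rhd(x)$ gives $x \rhd y_i$, that is $y_i \lhd x$. The hypothesis on $q$ then yields $T(y_i, q(y_i))$, which by the identity $u_i = q(y_i)$ is precisely $T(y_i, u_i)$. As $i < k$ was arbitrary, $(\forall i < k)\, T(y_i, u_i)$ holds, and together with $|u| = k$ this is exactly the statement $T_\rhd(x,u)$.

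Since every step is a direct appeal to a definition, there is no substantive obstacle here; the only point requiring (minor) care is to match the list $c_\rhd(x)$ that enumerates the $\rhd$-predecessors in the definition of $T_\rhd$ with the list over which $\pnc$ ranges, which is immediate from the assumption that $c_\rhd(x)$ is ordered without repetitions. The argument is the exact analogue of Lemma \ref{lem-treecnc}, with the map operation $\pnc$ and the predicate $T_\rhd$ playing the roles of the concatenation operation $\cnc$ and the predicate $T$.
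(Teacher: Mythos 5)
Your proof is correct and is essentially the paper's own argument: the paper simply remarks that the lemma ``follows directly from the definitions,'' and your write-up is exactly that unwinding of $\pnc$, the branching modulus, and the definition of $T_\rhd$ spelled out in full.
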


\begin{lemma}
\label{lem-absdef}
Let $Y:=X\times\arr{X}$ and $c_\succ,c_\rhd:X\to X^\ast$ are some fixed terms of $T_0$ which form branching moduli for $\succ$ and $\rhd$. Then there is a functional $\Psi:(Y^\NN\to\NN)\to Y^\ast\to X^\ast$ definable in $T_0+\rec^{\rhd,X^\ast}+\BR^{Y,X^\ast}$ which satisfies
\begin{equation*}
\Psi_\omega(a,b)=\begin{cases}[] & \mbox{if $|a|=0$ or $\omega(\widehat{a,b})<|a|$}\\ \bar{a}\ast\cnc_{y\in c_\succ(\bar{a})} R_{a,b}(y) & \mbox{otherwise} \end{cases}
\end{equation*}
where $R_{a,b}:X\to X^\ast$ in turn satisfies
\begin{equation*}
R_{a,b}(y):=\begin{cases}\bar{b}_i[y] & \mbox{if $y\preceq c_\rhd(\bar{a})_i$ for some $i<|c_\rhd(\bar{a})|$}\\ \Psi_{\omega}(a\ast y,b\ast\pnc_{z\in c_\rhd(y)} R_{a,b}(z)) & \mbox{otherwise}\end{cases}
\end{equation*}
where in the first line, for $d\in X^\ast$ and $y\in X$, $d[y]\subset d$ denotes some sequence contained in $d$ and satisfying $T(y,d[y])$ whenever it exists (and just $[]$ otherwise).
\end{lemma}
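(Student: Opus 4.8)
The plan is to realise $\Psi_\omega$ directly as an instance of Spector bar recursion $\BR^{Y,X^\ast}$, using the wellfounded recursor $\rec^{\rhd,X^\ast}$ to define the inner function $R_{a,b}$. Representing an element of $Y^\ast$ as a pair $\pair{a,b}$ with $a\in X^\ast$ and $b\in(\arr{X})^\ast$ exactly as in the statement, I would set $\Psi_\omega(a,b):=B^{Y,X^\ast}_{\omega,g,h}(\pair{a,b})$ for suitable $g,h$. The base function is $g\pair{a,b}:=[]$; by the defining equation of $B$ this already delivers the value $[]$ whenever the termination test $\omega(\ext{\pair{a,b}})<|a|$ fires, matching the second disjunct of the first case. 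The step function $h$ receives, besides $\pair{a,b}$, the recursive oracle $F:=\lambda x.\,B_{\omega,g,h}(\pair{a,b}\ast x):Y\to X^\ast$, and I would let $h\pair{a,b}F$ return $[]$ when $|a|=0$ and $\bar a\ast\cnc_{y\in c_\succ(\bar a)}R_{a,b}(y)$ otherwise. The explicit test $|a|=0$ is needed because the bar-recursion test $\omega(\cdot)<|a|$ can never fire when $|a|=0$, yet $\Psi_\omega$ must return $[]$ there.

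The function $R_{a,b}$ depends only on the current level's data $a,b$ and on the oracle $F$, all in scope inside $h$, and for fixed $a,b,F$ it is a genuine wellfounded recursion over $\rhd$. I would take $R_{a,b}:=\rec^{\rhd,X^\ast}_\phi$, where on input $y$ the step $\phi$ may use $R_{a,b}(z)$ for every $z$ with $y\rhd z$: it returns $\bar b_i[y]$ when $y\preceq c_\rhd(\bar a)_i$ for some $i<|c_\rhd(\bar a)|$, and otherwise $F\pair{y,\pnc_{z\in c_\rhd(y)}R_{a,b}(z)}$, the values $R_{a,b}(z)$ for $z\in c_\rhd(y)$ being available since $z\in c_\rhd(y)$ gives $y\rhd z$. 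The key point is that this last expression is the intended $\Psi$-call: unfolding $F$ yields $B_{\omega,g,h}(\pair{a,b}\ast\pair{y,\pnc_{z\in c_\rhd(y)}R_{a,b}(z)})=\Psi_\omega(a\ast y,\,b\ast\pnc_{z\in c_\rhd(y)}R_{a,b}(z))$. Since $\cnc$, $\pnc$, the ordered intersection and the subsequence extraction $d[y]$ are all definable in $\sT_0$, the only nonelementary ingredients are one bar recursor and one $\rhd$-recursor, so $\Psi$ lies in $\sT_0+\rec^{\rhd,X^\ast}+\BR^{Y,X^\ast}$ as claimed.

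It then remains to check that this term satisfies the two displayed fixed-point equations, which I would do by unfolding the defining axioms of $B$ and $\rec^\rhd$ and splitting into the relevant cases (base versus step branch of the bar recursion, and whether or not $y\preceq c_\rhd(\bar a)_i$ for some $i$); in each case the right-hand side reduces immediately to the required form. The main obstacle is conceptual rather than computational: one must see through the apparent mutual recursion between $\Psi$ and $R$ and recognise that it decomposes along two independent wellfounded directions. Within a single bar-recursion level the parameters $a,b$ are frozen and $R_{a,b}$ is an honest $\rhd$-recursion which bottoms out at the direct predecessors $c_\rhd(\bar a)_i$, where it merely reads off the precomputed subtree $\bar b_i[y]$ instead of recursing; the only calls that are \emph{not} $\rhd$-descending are exactly those routed through $F$, i.e. pushed one step deeper into the bar recursion by the extension $\pair{a,b}\mapsto\pair{a,b}\ast x$. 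Verifying that every call issued by $\phi$ is either strictly $\rhd$-below its argument or handed to the oracle $F$ — so that the two recursors nest without circularity — is the step requiring the most care, and is precisely what makes the definition well-formed in the stated calculus.
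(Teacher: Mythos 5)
Your construction is exactly the one the paper gives in Appendix \ref{sec-app}: $\Psi_\omega:=\BR^{Y,X^\ast}_{\omega,g,h}$ with $g$ constantly $[]$, $h$ handling the $|a|=0$ case explicitly and otherwise invoking an inner $\rec^{\rhd,X^\ast}$ whose step function either reads off $\bar{b}_i[y]$ or routes the call through the bar-recursive oracle. The proposal is correct and takes essentially the same approach, including the key observation that the apparent mutual recursion splits into an outer bar recursion and an inner $\rhd$-recursion with frozen parameters.
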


\begin{proof}
Routine: See Appendix \ref{sec-app} for full details.
\end{proof}

For the purposes of our verification proof, we make a small assumption: That $0$ encodes some object of $X$ which is minimal w.r.t. $\lhd$ i.e. contains no subterms. In particular, we would have $T_\rhd(0,0_{X^\ast})$ since $0_{X^\ast}$ is assumed to be the empty sequence. While not strictly necessary, this assumption allows us to use the usual variant of bar recursion as above. 

\begin{theorem}
\label{thm-abscomp}
Suppose that $\Psi$ is defined as in Lemma \ref{lem-absdef}, and define $\Phi:(Y^\NN\to\NN)\to X\to X^\ast$ from $\Psi$ over $\sT_0+\rec^{\rhd,X^\ast}$ as
\begin{equation*}
\Phi_\omega(x)=\Psi_\omega([x],[\pnc_{y\in c_\rhd(x)}\Phi_\omega(y)]).
\end{equation*}
Then whenever $\omega$ is a modulus of minimal wellfoundedness for $\succ$, $\rhd$ and $\gg$ then $\Phi_\omega$ is a derivation function for $\succ$.
\end{theorem}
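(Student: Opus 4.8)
The plan is to reproduce the proof of the abstract termination principle (Theorem~\ref{thm-abs}) at the level of the constructed terms, combined with the bar-recursive verification style of Theorem~\ref{thm-tree}. Call a pair $(a,b)\in X^\ast\times(\arr{X})^\ast$ a \emph{minimal $\gg$-chain} if $|a|>0$, $C_\gg(a)$, and $T_\rhd(a_i,b_i)$ holds for all $i<|a|$; by the characterisation of minimal sequences in terms of $T_\rhd$ this is precisely a finite initial segment of a minimal sequence along which $\gg$ descends. The heart of the argument will be the following computational analogue of~(\ref{eqn-abs}): for every minimal $\gg$-chain $(a,b)$,
\[
\neg T(\bar a,\Psi_\omega(a,b))\;\to\;(\exists y\ll\bar a)\big(T_\rhd(y,\bar{b'})\wedge\neg T(y,\Psi_\omega(a\ast y,b\ast\bar{b'}))\big),
\]
where $\bar{b'}:=\pnc_{z\in c_\rhd(y)}R_{a,b}(z)$, so that $(a\ast y,b\ast\bar{b'})$ is again a minimal $\gg$-chain on which $\Psi_\omega$ fails.

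To prove this I would first observe that minimality of $(a,b)$ forces $\omega(\widehat{a,b})\geq|a|$: padding with the $\lhd$-minimal code $0$ keeps the sequence minimal, since the standing assumption $T_\rhd(0,0_{X^\ast})$ covers the padded positions, so if $\omega(\widehat{a,b})<|a|$ then the modulus property of Definition~\ref{defn-minwf} would produce some $i<|a|-1$ with $a_i\ngg a_{i+1}$, contradicting $C_\gg(a)$. Hence $\Psi_\omega(a,b)$ unfolds to $\bar a\ast\cnc_{y\in c_\succ(\bar a)}R_{a,b}(y)$, and the contrapositive of Lemma~\ref{lem-treecnc} yields some $y\prec\bar a$ with $\neg T(y,R_{a,b}(y))$, so the set $\{y\prec\bar a:\neg T(y,R_{a,b}(y))\}$ is nonempty. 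Exactly as in Theorem~\ref{thm-abs} I would then take $y$ to be $\rhd$-minimal in this set via the minimum principle on $\rhd$. Such a $y$ cannot fall under the first clause of $R_{a,b}$: if $y\preceq c_\rhd(\bar a)_i$ then, since $T_\rhd(\bar a,\bar b)$ gives $T(c_\rhd(\bar a)_i,\bar b_i)$ and $c_\rhd(\bar a)_i\succeq y$, the element $y$ occurs in the tree $\bar b_i$, so $\bar b_i[y]$ is a genuine derivation tree of $y$, contradicting $\neg T(y,R_{a,b}(y))$. Hence $y$ lies in the second clause, giving $R_{a,b}(y)=\Psi_\omega(a\ast y,b\ast\bar{b'})$, and decomposition property~(\ref{decompi}) then forces $\bar a\gg y$.

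It remains to preserve minimality, i.e. to establish $T_\rhd(y,\bar{b'})$; by Lemma~\ref{lem-map} this reduces to showing $T(z,R_{a,b}(z))$ for every $z\lhd y$. For such $z$, decomposition property~(\ref{decompii}) gives $\bar a\succ z$, hence $z\prec\bar a$; and since $z$ is $\rhd$-below the $\rhd$-minimal bad element $y$, it cannot itself belong to the bad set, so $T(z,R_{a,b}(z))$ holds. This is the precise computational content of the step in Theorem~\ref{thm-abs} where minimality of the chosen element is used to show it lies in $A$, and I expect it to be the main obstacle: the whole point is that the recursion keeps the descending $\gg$-sequence \emph{inside} the minimal sequences, which is exactly what makes the modulus of minimal wellfoundedness applicable.

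Finally, to conclude $(\forall x)T(x,\Phi_\omega(x))$, I would argue by contradiction and pick $\alpha_0$ to be $\rhd$-minimal with $\neg T(\alpha_0,\Phi_\omega(\alpha_0))$. Minimality gives $T(y,\Phi_\omega(y))$ for all $y\lhd\alpha_0$, whence $T_\rhd(\alpha_0,b_0)$ for $b_0=\pnc_{y\in c_\rhd(\alpha_0)}\Phi_\omega(y)$ by Lemma~\ref{lem-map}, so by the defining equation of $\Phi$ the pair $([\alpha_0],[b_0])$ is a minimal $\gg$-chain on which $\Psi_\omega$ fails. Iterating the displayed claim via dependent choice then produces an infinite minimal sequence $\alpha_0\gg\alpha_1\gg\cdots$, contradicting the assumption that $\omega$ is a modulus of minimal wellfoundedness, just as the dependent-choice step closes the proof of Theorem~\ref{thm-abs}. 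Since $\Phi_\omega$ is definable as stated (Lemma~\ref{lem-absdef}), this shows it is a derivation function for $\succ$.
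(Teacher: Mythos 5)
Your proposal is correct and follows essentially the same route as the paper's own proof: the same key claim (the computational analogue of the minimal-bad-element step, with $\omega(\widehat{a,b})\geq|a|$ via the padding assumption $T_\rhd(0,0_{X^\ast})$, the minimum principle on $\rhd$, the two decomposition properties, and Lemma \ref{lem-map} to preserve $T_\rhd$), closed off by dependent choice against the modulus of minimal wellfoundedness. The only difference is that you spell out explicitly why $T_\rhd(\alpha_0,b_0)$ holds at the start of the final contradiction, which the paper leaves implicit.
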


\begin{proof}
We first claim that for any nonempty $\pair{a,b}\in Y^\ast$ such that $T_\rhd(a_i,b_i)$ for all $i<|a|$ and $C_\gg(a)$ then
\begin{equation}
\label{eqn-abscomp}
\neg T(\bar{a},\Psi_{\omega}(a,b))\to (\exists y\ll \bar{a},u)(\neg T(y,\Psi_\omega(a\ast y,b\ast u))\wedge T_\rhd(y,u)).
\end{equation}
To prove the claim, we begin by observing that $\omega(\widehat{a,b})\geq |a|$. To see this, observe that $\hat{a}$ is a minimal sequence relative to $\hat{b}$, by our assumption that $T_\rhd(0,[])$ holds. Thus $\omega(\widehat{a,b})<|a|$ would imply that there exist $i,i+1<|a|$ such that $a_i\gg a_{i+1}$, contradicting $C_\gg(a)$.

Therefore $\Psi_\omega(a,b)=\bar{a}\ast\cnc_{y\in c_\rhd(\bar{a})} R_{a,b}(y)$ and by Lemma \ref{lem-treecnc} there exists some $y\prec \bar{a}$ such that $\neg T(y,R_{a,b}(y))$ and so the set
\begin{equation*}
S_{a,b}:=\{x\in X\; : \; x\prec\bar{a}\wedge \neg T(x,R_{a,b}(x))\}
\end{equation*}
is nonempty. By the minimum principle $S_{a,b}$ contains some minimal $y$. Let $[z_0,\ldots,z_{k-1}]:=c_\rhd(\bar{a})$. If $u\preceq z_i\lhd \bar{a}$ for some $i<k$, then since $T_\rhd(\bar{a},\bar{b})$ and thus $T(z_i,\bar{b}_i)$ we would have $T(y,\bar{b}_i[y])$ and thus $T(y,R_{a,b}(y))$, a contradiction. Therefore as before $y\ll \bar{a}$ by decomposition property (\ref{decompi}). Now for $z\in c_\rhd(y)$, by property (\ref{decompii}) we have $z\prec\bar{a}$ and thus $T(z,R_{a,b}(z))$ by minimality of $y$. Therefore by Lemma \ref{lem-map}, $u:=\pnc_{z\in c_\rhd(y)} R_{a,b}(z)$ satisfies $T_\rhd(y,u)$ and from $\neg T(y,R_{a,b}(y))$ we obtain $\neg T(y,\Psi_\omega(a\ast y,b\ast u))$. This proves the claim.

Now suppose the theorem is false and take some minimal $x$ such that $\neg T(x,\Phi_\omega(x))$. Then $T_\rhd(x,v)$ and $\neg T(x,\Psi_\omega([x],[v]))$ hold for $v:=\pnc_{y\in c_\rhd{x}}\Phi_\omega(y)$, and by dependent choice in conjunction with (\ref{eqn-abscomp}) we obtain a pair of sequences $\alpha,\beta$ such that $T_\rhd(\alpha_n,\beta_n)$ but $\alpha_n\gg \alpha_{n+1}$ for all $n$, a contradiction.
\end{proof}

\subsection{Primitive recursive bounds via closure results for bar recursion}
\label{sec-simp-barclosure}

Having extracted a bar recursive term which computes derivation trees for $\succ$, we can already apply a variety of closure results in the literature to obtain crude upper bounds on the derivational complexity of $\succ$. The term $\Psi_{\omega}$ in Lemma \ref{lem-absdef} is formally definable not just from bar recursion but from a \emph{single instance} of $\BR^{X\times \arr{X},X^\ast}_{\omega,g,h}$ where $g$ and $h$ are definable in $\sT_0+\wrec^{\rhd,X^\ast}$.  

As a consequence, we can show that the derivational complexity of $\succ$ is bounded by some G\"{o}del primitive recursive function whenever the modulus of minimal wellfoundedness is definable in System T. This follows directly from Schwichtenberg's classic result \cite{Schwichtenberg(1979.0)} that System T is closed under the rule of bar recursion, whenever bar recursion has sequence type level $0$ or $1$. A more fine grained analysis is the following:
\begin{corollary}
\label{barclosure}
Suppose that $\succ$ is a binary relation whose branching modulus for both $\succ$ and $\rhd$ is definable in $\sT_0$, and that $\rec^{\rhd,X^\ast}$ is also definable in $\sT_0$.
\begin{enumerate}[(a)]

\item\label{barclosa} Whenever $\gg$ has a modulus of minimal wellfoundedness $\omega$ which is definable in $\sT_i$, the derivational complexity of $\succ$ is bounded by some function in $\sT_{i+3}$.

\item\label{barclosb} In the special case where $\omega$ is definable in $\sT_0$, the derivational complexity is bounded by some function in $\sT_1$.

\end{enumerate}

\end{corollary}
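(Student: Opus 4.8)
The plan is to reduce everything to the definability of the single bar recursor $\Psi_\omega$ from Lemma \ref{lem-absdef}, and then to invoke Schwichtenberg's closure theorem \cite{Schwichtenberg(1979.0)} with careful tracking of the recursion type level. First I would observe that by Theorem \ref{thm-abscomp} the functional $\Phi_\omega$ is a derivation function for $\succ$ whenever $\omega$ is a modulus of minimal wellfoundedness, and hence by Lemma \ref{lem-dt-W} the map $\lambda x.|\Phi_\omega(x)|$ bounds the derivational complexity of $\succ$. It therefore suffices to locate this map inside the appropriate fragment of System T. Since $\Phi_\omega$ is obtained from $\Psi_\omega$ by a single recursion over $\rhd$, and $\rec^{\rhd,X^\ast}$ is by hypothesis definable in $\sT_0$, the fragment containing $\lambda x.|\Phi_\omega(x)|$ is determined entirely by the fragment containing $\Psi_\omega$.

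Next I would isolate the bar recursion. By the remark preceding the corollary, $\Psi_\omega$ is definable from a single instance $\BR^{X\times\arr{X},X^\ast}_{\omega,g,h}$ whose step functionals $g,h$ lie in $\sT_0+\wrec^{\rhd,X^\ast}$; under the standing hypothesis that $\rec^{\rhd,X^\ast}$ and both branching moduli $c_\succ,c_\rhd$ are in $\sT_0$, this places $g,h$ in $\sT_0$. The key structural observation is that the sequence type $X\times\arr{X}=\NN\times(\NN^\ast)^\ast$ has type level $0$, so that $\Psi_\omega$ is an instance of bar recursion of lowest type. Thus the single ingredient beyond $\sT_0$ is a type-$0$ bar recursor with step data in $\sT_0$ and stopping functional $\omega$ in $\sT_i$.

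Finally I would apply the quantitative form of Schwichtenberg's result, as developed in \cite{OliStei(2017.0)}, which both establishes that System T is closed under the rule of bar recursion at sequence type level $0$ (and $1$) and bounds the increase in recursion level incurred by eliminating the bar recursor in favour of ordinary primitive recursion. For part (\ref{barclosa}), eliminating the single type-$0$ bar recursor with $\sT_0$ step terms and $\sT_i$ stopping functional raises the recursion level by $3$, placing $\Psi_\omega$, and hence $\lambda x.|\Phi_\omega(x)|$, in $\sT_{i+3}$. For part (\ref{barclosb}) one exploits the special structure of $\sT_0$-definable type-$2$ functionals: such an $\omega$ has a primitive recursive modulus of continuity and is computed from a primitive-recursively bounded initial segment, so the bar recursion terminates at a bounded depth and can be simulated by a single type-$1$ recursion, yielding $\sT_1$.

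I expect the main obstacle to be the quantitative bookkeeping in Schwichtenberg's elimination procedure: verifying that the recursion level rises by exactly $3$ in the general case, and justifying the sharper collapse to $\sT_1$ when $\omega\in\sT_0$. A secondary point requiring care is confirming that the mutual recursion defining $R_{a,b}$ and the outer recursion producing $\Phi_\omega$ from $\Psi_\omega$ genuinely contribute no recursion above level $0$, so that the entire level increase is attributable to the bar recursor itself.
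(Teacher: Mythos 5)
Your proposal is correct and follows the paper's own argument essentially step for step: reduce to the single lowest-type instance $\BR_{\omega,g,h}$ with $g,h\in\sT_0$, apply the Oliva--Steila bound to get $\sT_{i+3}$ for part (a), and treat part (b) as a sharper special case. The only divergence is that for part (b) the paper simply cites Howard's refined closure result for bar recursion at lower types \cite{Howard(1981.0)}, whereas you sketch the underlying continuity/bounded-depth argument that this citation encapsulates.
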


\begin{proof}
Since $X$ is coded in the natural numbers, both the sequence type $X\times\arr{X}$ and the output type $X^\ast$ can also be encoded in $\NN$, and so the functional $\Psi_{\omega}$ is definable from a single instance $\BR_{\omega,g,h}$ of bar recursion of lowest type. By the recent analysis of Oliva and Steila \cite{OliStei(2017.0)}, whenever the parameters $g,h$ are in $\sT_0$ and $\omega$ is in $\sT_i$, the bar recursor $\BR_{\omega,g,h}$ can be defined in $\sT_{i+3}$ (see \cite[Corollary 3.5]{OliStei(2017.0)}). But then $\Phi_{\omega}$ is also  definable in $\sT_{i+3}$, and since a derivational complexity function for $\succ$ is given by $\lambda x.|\Phi_{\omega}(x)|$, this gives us (\ref{barclosa}). Part (\ref{barclosb}) follows analogously using Howard's more refined result for lower types \cite{Howard(1981.0)}.
\end{proof}
Corollary \ref{barclosure} is by no means exhaustive. For example, Howard's closure theorem \cite{Howard(1981.0)} is extended to fragments of the Grzegorzyk hierarchy by Kreuzer \cite{Kreuzer(2012.1)}, though it is unclear whether this would be applicable here, since these fragments do not have access to the full recursor of lowest type. Note that it could also be that a more carefully analysis of the particular form of bar recursion we use could lead to a significantly improved version of Corollary \ref{barclosure}. We conjecture that our bar recursive program is actually closed on fragments of System $\sT$, although we leave this open for now.

However, all of this demonstrates how our approach of extracting concrete programs and then appealing to computability theory of those programs leads to extremely general complexity results. In the next section, we show that the situation improves further if we strengthen our hypothesis by replacing the modulus of minimal wellfoundedness by some explicit recursor.

\subsection{Derivation functions for inductively wellfounded orders}
\label{sec-simp-prclosure}

We now demonstrate how Theorem \ref{thm-abs} and the associated complexity bounds can be improved if we take as a stronger premise that minimal sequences are inductively wellfounded with respect to some concrete relation $\rrhd$ on $X\times\arr{X}$.
\begin{lemma}
\label{lem-interdef}
Suppose that $\rrhd$ is some inductively wellfounded relation on $X\times\arr{X}$, and that Suppose that $c_\succ,c_\rhd:X\to X^\ast$ are some fixed terms of $T_0$ which form branching moduli for $\succ$ and $\rhd$. Then there is a functional $\Gamma:X\times\arr{X}\to X^\ast$ definable in $T_0+\rec^{\rhd,X^\ast}+\rec^{\rrhd,X^\ast}$ which satisfies
\begin{equation*}
\Gamma(x,u)=x\ast\cnc_{y\in c_\succ(x)} \tilde R_{x,u}(y) 
\end{equation*}
where $\tilde R_{x,u}:X\to X^\ast$ satisfies
\begin{equation*}
R_{x,u}(y):=\begin{cases}u_i[y] & \mbox{if $y\preceq c_\rhd(x)_i$ for some $i<|c_\rhd(x)|$}\\ \Gamma(y,v) & \mbox{if $(x,u)\rrhd (y,v)$}\\ [] & \mbox{otherwise}\end{cases}
\end{equation*}
for $v:=\pnc_{z\in c_\rhd(y)}R_{x,u}(z)$. Suppose that $\rrhd$ satisfies
\begin{equation}
\label{eqn-interdefhyp}
T_\rhd(x,u)\wedge T_\rhd(y,v)\wedge x\gg y\to (x,u)\rrhd (y,v)
\end{equation}
for all $(x,u),(y,v)$. Then for any $x,u\in T_\rhd$ we have
\begin{equation}
\label{eqn-interdef}
(\forall a,b)(\pair{a,b}\in T_\rhd\wedge C_\gg(a\ast x)\to\Psi_\omega(a\ast x,b\ast u)=\Gamma(x,u))
\end{equation}
whenever $\omega$ is a modulus of minimal wellfoundedness, where $\Psi_\omega$ is defined as in Lemma \ref{lem-absdef}.
\end{lemma}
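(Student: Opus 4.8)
The definability of $\Gamma$ in $\sT_0+\rec^{\rhd,X^\ast}+\rec^{\rrhd,X^\ast}$ is routine and mirrors Lemma \ref{lem-absdef}, with the outer bar recursion of $\Psi_\omega$ replaced by $\rrhd$-recursion and the helper $\tilde R_{x,u}$ handled by $\rhd$-recursion along $c_\rhd$; I would relegate this to the appendix. The content of the lemma is the equality (\ref{eqn-interdef}), which I would establish by induction over $\rrhd$ on the pair $(x,u)$ — this is available since $\rrhd$ is inductively wellfounded, and it matches the recursion that $\Gamma$ itself performs. Thus I fix $(x,u)$ with $T_\rhd(x,u)$, assume (\ref{eqn-interdef}) for every $(y,v)$ with $(x,u)\rrhd(y,v)$, and fix arbitrary $a,b$ with $\pair{a,b}\in T_\rhd$ and $C_\gg(a\ast x)$.

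The first step is the modulus argument already used in Theorem \ref{thm-abscomp}: I must check that $\Psi_\omega(a\ast x,b\ast u)$ enters its recursive branch, i.e. that $\omega(\widehat{a\ast x,b\ast u})\geq|a\ast x|$. Padding with zeros and using $T_\rhd(a_i,b_i)$, $T_\rhd(x,u)$ and $T_\rhd(0,0_{X^\ast})$, the sequence $\widehat{a\ast x,b\ast u}$ satisfies the premise of Definition \ref{defn-minwf}, so $\omega$ returns an index carrying a non-$\gg$ step; since $C_\gg(a\ast x)$ forbids any such step below $|a|$, this index is at least $|a\ast x|$. Hence $\Psi_\omega(a\ast x,b\ast u)=x\ast\cnc_{y\in c_\succ(x)}R_{a\ast x,b\ast u}(y)$, which has the same outer shape as $\Gamma(x,u)=x\ast\cnc_{y\in c_\succ(x)}\tilde R_{x,u}(y)$. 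The goal therefore reduces to the helper identity $R_{a\ast x,b\ast u}(y)=\tilde R_{x,u}(y)$ for each $y\in c_\succ(x)$, i.e. each $y$ with $x\succ y$.

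I would then compare the two helpers clause by clause. When $y\preceq c_\rhd(x)_i$ for some $i$, both return $u_i[y]$ and the case is trivial. Otherwise decomposition property (\ref{decompi}) forces $x\gg y$, since its only alternative — some $z\lhd x$ with $z\succeq y$ — would put $y$ in the previous case. Here $R_{a\ast x,b\ast u}(y)=\Psi_\omega((a\ast x)\ast y,(b\ast u)\ast w)$ with $w=\pnc_{z\in c_\rhd(y)}R_{a\ast x,b\ast u}(z)$, and I want to identify this with $\Gamma(y,v)=\tilde R_{x,u}(y)$ where $v=\pnc_{z\in c_\rhd(y)}\tilde R_{x,u}(z)$. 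Granting momentarily that $w=v$ and $T_\rhd(y,v)$, hypothesis (\ref{eqn-interdefhyp}) applied to $T_\rhd(x,u)$, $T_\rhd(y,v)$ and $x\gg y$ gives $(x,u)\rrhd(y,v)$; this simultaneously makes $\tilde R_{x,u}(y)$ choose its $\Gamma(y,v)$ clause and activates the induction hypothesis at $(y,v)$. Since $\pair{a\ast x,b\ast u}\in T_\rhd$ and $C_\gg((a\ast x)\ast y)$ holds (the step $x\gg y$ extends the chain), the induction hypothesis yields $\Psi_\omega((a\ast x)\ast y,(b\ast u)\ast v)=\Gamma(y,v)$, closing the recursive case.

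The heart of the proof — and the step I expect to be hardest — is discharging the two side conditions $w=v$ and $T_\rhd(y,v)$. The equality $w=v$ is the same helper identity one level down, for the $\rhd$-smaller arguments $z\in c_\rhd(y)$, and since the $\pnc$-part of each helper recurses exactly along $c_\rhd$ it can be obtained by a subsidiary induction on $\rhd$ running beneath the main $\rrhd$-induction. For $T_\rhd(y,v)$, Lemma \ref{lem-map} reduces the task to proving $T(z,\tilde R_{x,u}(z))$ for all $z\lhd y$, i.e. the correctness of the subtrees the helper builds. The key observation that makes this go through is decomposition property (\ref{decompii}): from $x\gg y$ and $y\rhd z$ we get $x\succ z$, and applying (\ref{decompi}) to $x\succ z$ shows that every such $z$ falls into either the subterm clause (where correctness is inherited from $T_\rhd(x,u)$, the $u_i[z]$ being genuine subtrees) or the $x\gg z$ clause (where $\tilde R_{x,u}(z)=\Gamma(z,\cdot)$ and correctness comes from the correctness of the $\Psi_\omega$-construction established in Theorem \ref{thm-abscomp}), and \emph{never} into the degenerate $[\,]$ clause. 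Organising these nested inductions and the correctness bookkeeping so that no circularity arises is the delicate part of the argument; everything else is unwinding definitions.
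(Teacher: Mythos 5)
Your proposal follows essentially the same route as the paper: induction on $\rrhd$ at $(x,u)$, the modulus argument showing $\Psi_\omega(a\ast x,b\ast u)$ takes its recursive branch, reduction to the helper identity $R_{a\ast x,b\ast u}(y)=\tilde R_{x,u}(y)$ via a side induction on $\rhd$, and an appeal to (\ref{eqn-interdefhyp}) in the $x\gg y$ case. The only difference is that you spell out the verification of the side condition $T_\rhd(y,v)$ (via Lemma \ref{lem-map}, decomposition property (\ref{decompii}) and the correctness of $\Psi_\omega$), which the paper's proof asserts without further justification.
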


\begin{proof}
That $\Gamma$ is definable in $T_0+\rec^{\rhd,X^\ast}+\rec^{\rrhd,X^\ast}$ is just a simple adaptation of the proof of Lemma \ref{lem-absdef}. We prove (\ref{eqn-interdef}) by induction on $\rrhd$. Note that $C_\gg(a\ast x)$ implies that $\omega(\widehat{a\ast x,b\ast u})<|a|+1$ and thus $\Psi_\omega(a\ast x,b\ast u)=x\ast\cnc_{y\in c_\succ(x)} R_{a\ast x,b\ast u}(y)$. So we're done if we can show that $R_{a\ast x,b\ast u}(y)=\tilde R_{x,u}(y)$ for all $y\prec x$.

We do this by a side induction on $\rhd$, so fix some $y$ and assume that $R_{a\ast x,b\ast u}(z)=\tilde R_{x,u}(z)$ for all $z\lhd y$. We only need to check the case $x\gg y$, where we must show that $(x,u)\rrhd (y,v)$ for $v:=\pnc_{z\in c_\rhd(y)} \tilde R_{x,u}(z)=\pnc_{z\in c_\rhd(y)} R_{a\ast x,b\ast u}(z)$. But since $T_\rhd(y,v)$ this follows by (\ref{eqn-interdefhyp}) and thus
\begin{equation*}
\tilde R_{x,u}(y)=\Gamma(y,v)=\Psi_\omega(a\ast x\ast y,b\ast u\ast v)=R_{a\ast x,b\ast u}(y)
\end{equation*}
by the main induction hypothesis.\end{proof}

\begin{corollary}
\label{cor-prcomp}
Under the conditions of Lemma \ref{lem-interdef}, the functional $\Phi_\omega$ in Theorem \ref{thm-abs} is definable in $\sT_0+\rec^{\rhd,X^\ast}+\rec^{\rrhd,X^\ast}$ for any modulus of minimal wellfoundedness $\omega$.
\end{corollary}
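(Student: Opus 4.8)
The plan is to show that the bar-recursively defined functional $\Phi_\omega$ of Theorem~\ref{thm-abscomp} in fact satisfies a recursion purely over $\rhd$ whose step function is the functional $\Gamma$ from Lemma~\ref{lem-interdef}. Since $\Gamma$ is already definable in $\sT_0+\rec^{\rhd,X^\ast}+\rec^{\rrhd,X^\ast}$, transferring this recursion to $\Phi_\omega$ immediately yields the claimed definability. The bridge between the two is equation (\ref{eqn-interdef}), which asserts that $\Psi_\omega$ and $\Gamma$ agree on all inputs lying on a $\gg$-chain, allowing us to eliminate the bar recursion hidden inside $\Psi_\omega$.

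First I would derive a closed recursive equation for $\Phi_\omega$. Recall from Theorem~\ref{thm-abscomp} that $\Phi_\omega(x)=\Psi_\omega([x],[v])$ where $v:=\pnc_{y\in c_\rhd(x)}\Phi_\omega(y)$. Since $\omega$ is a modulus of minimal wellfoundedness, Theorem~\ref{thm-abscomp} guarantees that $\Phi_\omega$ is a derivation function, so $T(y,\Phi_\omega(y))$ holds for every $y$, and Lemma~\ref{lem-map} then yields $T_\rhd(x,v)$. With this in hand I would instantiate (\ref{eqn-interdef}) at the pair $(x,v)\in T_\rhd$ with $a=b=[]$; the side conditions $\pair{[],[]}\in T_\rhd$ and $C_\gg([x])$ both hold vacuously, so (\ref{eqn-interdef}) delivers $\Psi_\omega([x],[v])=\Gamma(x,v)$. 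Chaining these identities gives
\begin{equation*}
\Phi_\omega(x)=\Gamma\Bigl(x,\;\pnc_{y\in c_\rhd(x)}\Phi_\omega(y)\Bigr).
\end{equation*}

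Second, I would observe that this is exactly a wellfounded recursion over $\rhd$: the recursive calls $\Phi_\omega(y)$ occur only for $y\in c_\rhd(x)$, that is for $y\lhd x$. Defining $F$ by $F(x)(g):=\Gamma(x,\pnc_{y\in c_\rhd(x)}g(y))$ — a term of $\sT_0+\rec^{\rhd,X^\ast}+\rec^{\rrhd,X^\ast}$ by the definability of $\Gamma$ from Lemma~\ref{lem-interdef} together with the $\sT_0$-definability of $c_\rhd$ and of $\pnc$ — the functional $\rec^{\rhd,X^\ast}_F$ lives in the target system and satisfies the same defining equation. A straightforward induction on $\rhd$ then shows that $\rec^{\rhd,X^\ast}_F$ coincides with $\Phi_\omega$, which establishes the corollary.

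The only genuine subtlety is supplying the hypothesis $T_\rhd(x,v)$ needed to invoke (\ref{eqn-interdef}): this is precisely the point at which the \emph{correctness} of $\Phi_\omega$ as a derivation function (Theorem~\ref{thm-abscomp}) must be fed into the \emph{definability} transfer, so that the bar-recursive and the $\rrhd$-recursive descriptions can be identified. Everything else — the vacuity of the side conditions, the recognition of the equation as an instance of $\rec^{\rhd,X^\ast}$, and the concluding $\rhd$-induction — is routine.
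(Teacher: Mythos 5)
Your proposal is correct and follows essentially the same route as the paper, whose proof consists precisely of instantiating (\ref{eqn-interdef}) at $|a|=0$ and then performing an induction over $\rhd$. Your writeup simply makes explicit the points the paper leaves implicit, namely that $T_\rhd(x,v)$ is supplied by the correctness of $\Phi_\omega$ from Theorem \ref{thm-abscomp} together with Lemma \ref{lem-map}, and that the resulting equation $\Phi_\omega(x)=\Gamma(x,\pnc_{y\in c_\rhd(x)}\Phi_\omega(y))$ is an instance of $\rec^{\rhd,X^\ast}$.
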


\begin{proof}
This follows from (\ref{eqn-interdef}), setting $|a|=0$ and using induction over $\rhd$.
\end{proof}

We can now give a more direct formulation of Corollary \ref{barclosure}:

\begin{corollary}
\label{prclosure}
Suppose that $\succ$ is a binary relation whose branching modulus for both $\succ$ and $\rhd$ is definable in $\sT_0$, and that $\rec^{\rhd,X^\ast}$ is also definable in $\sT_0$. Under the conditions of Lemma \ref{lem-interdef}, whenever $\rec^{\rrhd,X^\ast}$ is definable in $\sT_i$, the derivational complexity $\succ$ is bounded by some function also in $\sT_i$. In particular:
\begin{enumerate}[(a)]

\item\label{prclosa} For $i=0$, the derivational complexity is bounded by a primitive recursive function;

\item\label{barclosb} For $i=1$, the derivational complexity is bounded by a multiple recursive function.
\end{enumerate}

\end{corollary}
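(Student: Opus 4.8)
The plan is to reduce the corollary to the definability statement of Corollary \ref{cor-prcomp} together with the standard identification of the fragments $\sT_0$ and $\sT_1$ with the primitive recursive and multiply recursive functions. The essential feature of the present setting, as opposed to that of Corollary \ref{barclosure}, is that bar recursion has \emph{already} been eliminated: by equation (\ref{eqn-interdef}) the functional $\Psi_\omega$ agrees on the relevant arguments with the recursor-defined $\Gamma$, so the derivation function $\Phi_\omega$ can be computed without any reference to a modulus of wellfoundedness.

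First I would fix some modulus of minimal wellfoundedness $\omega$ for $\succ$, $\rhd$ and $\gg$; such an $\omega$ exists (classically) because the hypotheses of Lemma \ref{lem-interdef}, namely inductive wellfoundedness of $\rrhd$ together with (\ref{eqn-interdefhyp}), ensure that $\gg$ is wellfounded on all minimal sequences. Crucially, $\omega$ need not be definable in any fragment. By Theorem \ref{thm-abscomp} the functional $\Phi_\omega$ is then a derivation function for $\succ$, i.e.\ $(\forall x)\,T(x,\Phi_\omega(x))$, and by Corollary \ref{cor-prcomp} this very same $\Phi_\omega$ is definable already in $\sT_0+\rec^{\rhd,X^\ast}+\rec^{\rrhd,X^\ast}$.

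Next I would count fragments. By hypothesis the branching moduli $c_\succ,c_\rhd$ and the recursor $\rec^{\rhd,X^\ast}$ all lie in $\sT_0$, whereas $\rec^{\rrhd,X^\ast}$ lies in $\sT_i$; since $\sT_0\subseteq\sT_i$, it follows that $\Phi_\omega$ is definable in $\sT_i$. As $\lambda d.|d|$ is available already in $\sT_0$, the map $\lambda x.|\Phi_\omega(x)|$ is a function of $\sT_i$, and by Lemma \ref{lem-dt-W} together with the fact that $\Phi_\omega$ is a derivation function we have $\dc(x)\leq|\Phi_\omega(x)|$ for every $x$. Hence this $\sT_i$-function bounds the derivational complexity of $\succ$, which is the general statement; parts (a) and (b) then drop out by specialising to $i=0$ and $i=1$ and invoking the well-known characterisations of the functions definable in $\sT_0$ (primitive recursive) and in $\sT_1$ (multiply recursive).

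The argument is short because the real work sits in Lemma \ref{lem-interdef} and Corollary \ref{cor-prcomp}. The one point I would take care over --- and which I regard as the only genuine obstacle --- is the \emph{uniformity} of the bound in $\omega$: one must be sure that the extracted $\sT_i$-term does not covertly reintroduce the modulus (which itself would typically lie far above $\sT_i$). This is precisely what the collapse (\ref{eqn-interdef}) of $\Psi_\omega$ onto $\Gamma$ guarantees, so beyond checking that identity the remainder is a routine inspection of which recursors occur in the definition of $\Phi_\omega$.
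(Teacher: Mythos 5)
Your proposal is correct and follows essentially the route the paper intends: the corollary is an immediate consequence of Corollary \ref{cor-prcomp} (which eliminates bar recursion via the collapse (\ref{eqn-interdef}) of $\Psi_\omega$ onto the recursor-defined $\Gamma$) combined with Theorem \ref{thm-abscomp} and Lemma \ref{lem-dt-W}, followed by the standard identification of the type-$1$ functions of $\sT_0$ and $\sT_1$ with the primitive recursive and multiply recursive functions. Your remark that the classically-existing modulus $\omega$ never enters the extracted term is exactly the right point to flag, and it is indeed what (\ref{eqn-interdef}) secures.
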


\section{Application: Path orders and term rewriting}
\label{sec-trs}

We conclude by sketching how our abstract results can be applied in the special case where $X$ denotes a set of terms in some programming language, which we take here to be a simple term rewrite system. In this we show how the formalization of Buchholtz \cite{Buchholz(1995.0)} can be incorporated into our setting. The difference here is that we directly construct derivation functions in fragments of System $\sT$, rather than formalizing wellfoundedness proofs in fragments of Peano arithmetic.

\subsection{$(X,\rhd)$ as a term structure}
\label{sec-trs-terms}

Let $X$ now be instantiated as the set of terms ranging over some countable set of variables and some finite signature $\{f_1,\ldots,f_k\}$, where we assume for simplicity that each $f_i$ has a fixed arity (note that this latter restriction is not essential: see \cite[Section 3]{Buchholz(1995.0)}). Clearly $X$ can be arithmetized, and as in \cite{Buchholz(1995.0)}, we can assign each term a size as follows:
\begin{enumerate}[(i)]

\item $|x_i|:=i$;

\item $|f_j(t_1,\ldots,t_n)|:=\max\{n,|t_1|,\ldots,|t_n|\}+1$.

\end{enumerate}
and assume w.l.o.g. exists some monotone function $h$ such that $|t|\leq t<h(|t|)$ for all $t$. Let $\rhd$ denote the immediate subterm relation: in other words, $f(t_1,\ldots,t_n)\rhd t_i$ for all $i=1,\ldots,n$. Clearly, $s\lhd t$ implies $|s|<|t|$, and so the recursion over $\rhd$ is definable from the usual G\"{o}del recursor over $>$. In particular, $\rec^{\rhd,X^\ast}$ is definable in $\sT_0$.

\subsection{Approximations to recursive path orders}
\label{sec-trs-po}

In general, recursive path order on terms can be characterized in the abstract as follows: We set $t=f(t_1,\ldots,t_n)\succ s$ if either
\begin{enumerate}[(a)]

\item $t_i\succeq s$ for some $i=1,\ldots,n$; 

\item $t\succ_0 s$ and $t\succ s_i$ for all subterms $s_i$ of $s$,

\end{enumerate}
where typically $\succ_0$ is recursively defined in terms of $\succ$ itself. Note that by (a), $\succ$ contains the subterm relation, which means that it is a \emph{simplification order}. Condition (b) is closely related to the notion of a \emph{lifting} as studies in e.g. \cite{FerZan(1995.0)}. In any case, such an order is clearly a decomposition in the sense of our Definition \ref{defn-decomp} relative to $\gg$, where $t\gg s$ denotes the second case above. 

Recursive path orders of this kind are fundamental tools in the theory of term rewriting, as they provide us with a criterion for checking if finitely defined term rewrite system $\R$ is terminating. Here we would work with orders $\succ$ which are closed under contexts and substitutions, and then whenever the rules $l\to r$ of $\R$ satisfy $l\succ r$, then $\R$ guaranteed to be terminating. The main challenge is always to show that $\succ$ itself is wellfounded.

When it comes to computing complexity bounds, the first issue is that in general, recursive path orders are not finitely branching, and as a result proofs of wellfoundedness tend to use rather heavy proof theoretic machinery such as Kruskal's theorem. However, this is overcome in \cite{Buchholz(1995.0)} by considering finitary variants of the usual path orders, whose wellfoundedness can be proven in low fragments of arithmetic.

One can describe Buchholz' idea in a slightly more general form as follows: for some primitive recursive function $b:\NN\to\NN$ define the bounded $b$-approximation $\succ_b$ of $\succ$ by $t=f(t_1,\ldots,t_n)\succ_b s$ if $b(|t|)\geq |s|$ and either of the following hold:
\begin{enumerate}[(a)]

\item $t_i\succeq_b s$ for some $i=1,\ldots,n$; 

\item $t\gg_b s$ and $t\succ_b s_i$ for all subterms $s_i$ of $s$,

\end{enumerate}
where now $\gg_b$ is recursively defined in terms of $\succ_b$. Not only is $\succ_b$ now by definition finitely branching, but assuming that $\succ_b$ is a primitive recursive relation, as it invariably is, then $\succ_b$ is computably finitely branching: Because $t\succ_b s$ only if $b(|t|)\geq |s|$ and hence $h(b(|t|))\geq h(|s|)>s$, and we can therefore take the branching function $c_\succ(t)\in X^\ast$ to be the primitive recursively definable sequence consisting of exactly those terms $s\leq h(b(|t|))$ satisfying $s\prec_b t$.

The crux of the idea is the following: Suppose that for any $\R$ reducing under $\succ$, there is some $b$ such that $\R$ is reducing under $\succ_b$, in other words, for any \emph{fixed} $\R$ we can find a finitely branching approximation $\succ_b$ of $\succ$ sufficient for proving wellfoundedness of $\R$. Then the derivational complexity of $\R$ is bounded by the derivational complexity of $\succ_b$. 

Thus our complexity results in Corollaries \ref{barclosure} and \ref{prclosure} provide us with a means of bounding the derivational complexity of rewrite systems, and the generality of our results suggest that they are applicable to a wide range of different path orders. We finish by sketching a simple example.

\subsection{Example: the multiset path order}
\label{sec-trs-mult}

We now show how the well known primitive recursive bound on the complexity of rewrite systems terminating under the multiset path order can be reobtained in our setting. A simple variant of the path order is obtained by instantiating $\gg$ as $f(t_1,\ldots,t_n)\gg s$ if
\begin{itemize}

\item $s=g(s_1,\ldots,s_m)$ and $f>_F g$, or

\item $s=f(s_1,\ldots,s_n)$ and $t_i\succ s_i$ for some $i=1,\ldots,n$ and $s_j=t_j$ for all $j\neq i$.

\end{itemize}
where $>_F$ is some wellfounded relation on function symbols. It turns out that any rewrite system reducing under the multiset path order is also reducing under the approximate order $\succ_k$ in which the bounding function $b$ is simply $b(n)=n+k$, and $k$ is some sufficiently large number which can be effectively computed from the rules of the rewrite system.

Given in full, then, the approximate multiset path order $\succ_k$ is defined as follows: $t=f(t_1,\ldots,t_n)\succ_k s$ iff $k+|t|\geq |s|$ and either
\begin{enumerate}[(a)]

\item $t_i\succeq_k s$ for some $i=1,\ldots,n$; 

\item $f(t_1,\ldots,t_n)\gg_k s$

\end{enumerate}

where $f(t_1,\ldots,t_n)\gg_k s$ iff

\begin{enumerate}[(i)]

\item $s=g(s_1,\ldots,s_m)$ with $f>_F g$ and $t\succ_k s_i$ for all $i$;

\item $s=f(s_1,\ldots,s_n)$ and $t\succ_k s_i$ for all $i$ and $t_i\succ_k s_i$ for some $i$ and $s_j=t_j$ for all $j\neq i$.

\end{enumerate}

Now, define the relation $\rrhd_k$ on $X\times\arr{X}$ as follows: $\pair{f(t_1,\ldots,t_n),u}\rrhd_k \pair{g(s_1,\ldots,s_m),v}$ iff
\begin{equation*}
f>_F g \mbox{ \ \ \ or \ \ \ }f=g\wedge (\exists i)(u_i\supset v_i\wedge (\forall j\neq i)(u_j\supseteq v_j)).
\end{equation*}
It is easy to see that if $T_\rhd(t,u)\wedge T_\rhd(s,v)\wedge t\gg_k s$ then $\pair{t,u}\rrhd_{k}\pair{s,v}$: In the case that (i) holds then $f>_T g$ so this is clearly true, while if (ii) holds there is some $i$ such that $t_i\succ_k s_i$ but $t_j=s_j$ otherwise. But $T_\rhd(t,u)$ implies that $T(t_i,u_i)$, and analogously $T_\rhd(s,v)$ implies $T(s_i,v_i)$, and so $t_i\succ_k s_i$ implies that $v_i$ is a subsequence of $u_i$. Similarly, we must have $u_j=v_j$ otherwise.

Not only is $\rrhd_k$ clearly primitive recursive, but it is not difficult to show that $\rec^{\rrhd_k,X^\ast}$ is definable in $\sT_0$: This is just a bounded recursion in the first component, while in the second component we can find an encoding of $\arr{T}$ into $\NN$ such that $(\exists i)(u_i\supset  v_i \wedge(\forall j\neq i)( u_j\supseteq v_j))$ implies that $u>v$.

Therefore by Corollary \ref{prclosure}, the derivational complexity of $\succ_k$ is bounded by a primitive recursive function, and therefore the same is true for any rewrite system $\R$ reducing under $\succ$.

Both the multiset and lexicographic path orders are studied in more detail by the author together with Georg Moser in \cite{MosPow(2015.0)}, where a more detailed construction of the derivational complexity functions is given than our brief sketch here, although the main results of this paper constitute a considerable generalisation of \cite{MosPow(2015.0)}.

\section{Conclusion}
\label{sec-conc}

The main result of this paper was a constructive analysis of the wellfoundedness of abstract path orders, which in particular subsumes the usual recursive path orders encountered in the term rewriting literature. A such, the paper is a contribution to the proof theoretic analysis of termination, which has seen a number of recent developments \cite{BerOliSte(2015.0),BerSte(2015.0),FriYokSte(2017.0),MosPow(2015.0)}. As a side product of our theoretical work, we provide a series of metatheorems which allow us to relate the complexity of a wellfounded order to some subrecurive system of functionals. While we only sketched an illustration of this in Section \ref{sec-trs}, we believe that the formal extraction of programs from termination proofs has a great deal of potential in providing upper bounds on the complexity of programs, and in this article hope to have provided a promising first step in this direction. We conclude with a collection of open problems.

An obvious direction for future research is to use the techniques presented here to obtain new bounds and metatheorems for the complexity of concrete termination orders. While we mentioned the well-known recursive path orders as a simple example of where Corollary \ref{prclosure} could be applied, of particular interest would be the analysis of termination orders for which an upper bound on the induced derivational complexity is not known. 

Most termination orders in the literature work on sets of first order terms. However, up to the very final section we do not assume anything about the structure of $X$, and it would be interesting to find our whether our termination arguments can be applied to more interesting structures. In particular, Goubault-Larrecq \cite{GLar(2001.0)} considers wellfounded orders on graphs, automata and higher-order functionals, and it would be intriguing to see whether any of these are subsumed by our abstract principle, and whether any meaningful complexity results could be obtained.

In our approach, we establish complexity bounds by extracting higher-order recursive programs in some subrecursive calculus of functionals, and looking at the type $1$ functions definable in these calculi. A number of similar proof theoretic investigations of path orders and abstract notions of termination exist in the literature, notably those due to Weiermann \cite{Weiermann(1994.0),Weiermann(1998.0)} which are based on an intricate ordinal analysis. It would be instructive to make more precise how our framework based on variants of bar recursion compares to his.

Finally, as briefly mentioned in Section \ref{sec-simp-barclosure}, it would be interesting to formally establish a set of closure properties along the lines of \cite{EOP(2011.0),OliStei(2017.0),Schwichtenberg(1979.0)} for \emph{finitely branching} bar recursion, which would give a direct correspondence between the subrecursive strength of bar recursors and the derivational complexity of abstract orders.

\noindent\textbf{Acknowledgements.} I am indebted to Georg Moser for suggesting a proof theoretic study of termination principles, and in particular for pointing out that the results of \cite{Buchholz(1995.0)} can be viewed in a more abstract way. This work was partially supported by the Austrian Science Fund (FWF) project P 25781-N15.

\appendix

\section{Appendix}
\label{sec-app}

\begin{proof}
[Proof of Lemma \ref{lem-absdef}]
We define functions $g:Y^\ast\to X^\ast$ and $h:Y^\ast\to (Y\to X^\ast)\to X^\ast$ by
\begin{equation*}
\begin{aligned}
g(a,b)&:=[]\\
h(a,b)(p)&:=\begin{cases}[] & \mbox{if $|a|=0$}\\ \bar{a}\ast\cnc_{y\in c_\succ(\bar{a})}\rec^\rhd_{f_{\bar{a},\bar{b},p}}(y) & \mbox{otherwise}\end{cases}
\end{aligned}
\end{equation*}
where $f:X\to \arr{X}\to (Y\to X^\ast)\to X\to (X\to X^\ast)\to X^\ast$ is defined by
\begin{equation*}
f_{x,u,p}(y)(q):=\begin{cases}u_i[y] & \mbox{if $y\preceq c_\rhd(x)_i$ for some $i<|c_\rhd(x)|$}\\ p(y,\pnc_{z\in c_\rhd(y)} q(z)) & \mbox{otherwise}\end{cases}
\end{equation*}
and $d[y]$ is defined as in the statement of the lemma. Now, it is not difficult to see that since $\succ$ is primitive recursive and $c_\succ$ a branching modulus then $T(y,d)$ is primitive recursive, and thus so is $d[y]$ since this can be computed via a bounded search. Moreover, the case distinction in the definition of $f$ is primitive recursively decidable, and so the functional as a whole is clearly definable in $\sT_0$. It is obvious then that $f$ is definable in $\sT_0+\rec^{\rhd,X^\ast}$ and so 
\begin{equation*}
\Psi_\omega:=\BR^{Y,X^\ast}_{\omega,g,h}
\end{equation*}
is definable in $\sT_0+\rec^{\rhd,X^\ast}+\BR^{Y,X^\ast}$. To see that it satisfies the relevant equations is just a matter of unwinding definitions: We have $\Psi_\omega(a,b)=h(a,b)(\ldots)=[]$ is $|a|=0$ and $\Psi_\omega(a,b)=g(a,b)=[]$ if $\omega(\ext{a,b})<|a|$, and otherwise
\begin{equation*}
\begin{aligned}
\Psi_\omega(a,b)&=\bar{a}\ast\cnc_{y\in c_\succ(\bar{a})}R_{\omega,a,b}(y)
\end{aligned}
\end{equation*}
for $R_{\omega,a,b}:=\rec^\rhd_{f_{\bar{a},\bar{b},p}}$ and $p:=\lambda x,u.\Psi_\omega(a\ast x,b\ast u)$. But then 
\begin{equation*}
\begin{aligned}
R_{\omega,a,b}(y)&=f_{\bar{a},\bar{b},p}(y)(\lambda z\lhd y\; . \; R_{\omega,a,b}(z))\\
&=\begin{cases}\bar{b}_i[y] & \mbox{if $y\preceq c_\rhd(\bar{a})_i$ for some $i<|c_\rhd(x)|$}\\
p(y,\pnc_{z\in c_\rhd(y)} R_{\omega,a,b}(z)) & \mbox{otherwise}\end{cases}
\end{aligned}
\end{equation*}
and in the second line
\begin{equation*}
p(y,\pnc_{z\in c_\rhd(y)} R_{\omega,a,b}(z))=\Psi_\omega(a\ast y,b\ast \pnc_{z\in c_\rhd(y)} R_{\omega,a,b}(z))
\end{equation*}
which completes the proof.
\end{proof}

\bibliographystyle{plain}
\bibliography{/home/thomas/Dropbox/tp}
\end{document}